\newtheorem{theorem}{Theorem}
\newtheorem{corollary}{Corollary}
\newtheorem{lemma}{Lemma}
\newcommand{\ie}{\textit{i.e.\ }}
\newcommand{\Z}{\mathbb{Z}}
\newcommand{\priority}{\pi} 
\newcommand{\weight}{c}
\newcommand{\seq}[1]{\langle #1 \rangle}
\begin{document}

\title{A pseudo-quasi-polynomial algorithm \\ for mean-payoff parity games}

\author[1]{Laure Daviaud}
\author[1]{Marcin Jurdzi\'nski}
\author[1]{Ranko Lazi\'c}
\affil[1]{DIMAP, Department of Computer Science, University of Warwick, UK}
\date{}                     

\maketitle

\begin{abstract}
  In a mean-payoff parity game, one of the two players aims both to 
  achieve a qualitative parity objective and to minimize a quantitative
  long-term average of payoffs (aka.\ mean payoff).
  The game is zero-sum and hence the aim of the other player is to
  either foil the parity objective or to maximize the mean payoff.  

  Our main technical result is a pseudo-quasi-polynomial algorithm for
  solving mean-payoff parity games.
  All algorithms for the problem that have been developed for over a
  decade have a pseudo-polynomial and an exponential factors in their
  running times;
  in the running time of our algorithm the latter is replaced with a
  quasi-polynomial one.
  By the results of Chatterjee and Doyen (2012) and of Schewe,
  Weinert, and Zimmermann (2018), our main technical result implies
  that there are pseudo-quasi-polynomial algorithms for solving parity
  energy games and for solving parity games with weights. 

  Our main conceptual contributions are the definitions of strategy
  decompositions for both players, and a notion of
  progress measures for mean-payoff parity games that generalizes both
  parity and energy progress measures.
  The former provides normal forms for and succinct representations of
  winning strategies, and the latter enables the application to
  mean-payoff parity games of the order-theoretic machinery that
  underpins a recent quasi-polynomial algorithm for solving parity 
  games.    
\end{abstract}

\maketitle

%
%

\section{Introduction}
\label{section:introduction}
A motivation to study zero-sum two-player games on graphs comes 
from automata theory and logic, where they have been
used as a robust theoretical tool, for example, for streamlining 
of the initially notoriously complex proofs of Rabin's theorems on
the complementation of automata on infinite trees and the 
decidability of the monadic second-order logic on infinite 
trees~\cite{GH82,Zie98}, and for the development of the 
related theory of logics with fixpoint operators~\cite{EJ91}.
More practical motivations come from model checking and 
automated controller synthesis, where they serve as a clean 
combinatorial model for the study of the computational complexity
and algorithmic techniques for model checking~\cite{EJS01}, and 
for the automated synthesis of correct-by-design 
controllers~\cite{Tho95}. 
There is a rich literature on closely related ``dynamic games'' 
in the classical game theory and AI literatures reaching back to 
1950's, and games on graphs are also relevant to complexity 
theory~\cite{Con92} and to competitive ratio analysis of online 
algorithms~\cite{ZP96}. 

\subsection{Mean-payoff parity games}

A \emph{mean-payoff parity} game is played by two players---Con and
Dis---on a directed graph.
From the starting vertex, the players keep following edges of the
graph forever, thus forming an infinite path.
The set of vertices is partitioned into those owned by Con and those
owned by Dis, and it is the owner of the current vertex who picks
which outgoing edge to follow to the next current vertex.
Who is declared the winner of an infinite path formed by such
interaction is determined by the labels of vertices and edges
encountered on the path.
Every vertex is labelled by a positive integer called
its \emph{priority} and every edge is labelled by an integer called
its \emph{cost}.
The former are used to define the \emph{parity} condition:
the highest priority that occurs infinitely many times is \emph{odd}; 
and the latter are used to define the
(zero-threshold) \emph{mean-payoff} condition:   
the (lim-sup) long-run average of the costs is negative. 
If both the parity and the mean-payoff conditions hold then Con is
declared the winner, and otherwise Dis is. 
In the following picture, if Dis owns the vertex in the middle 
then she wins the game (with a positional strategy): 
she can for example always go to the left whenever she is in 
the middle vertex and this way achieve the positive mean 
payoff~$1/2$. 
Conversely, if Con owns the middle vertex then he wins the game. 
He can choose to go infinitely often to the left and see 
priority~$1$---in order to fulfill the parity condition---and 
immediately after each visit to the left, to go to the right a 
sufficient number of times---so as to make the mean-payoff negative. 
Note that a winning strategy for Con is not positional.

\begin{center}
\begin{tikzpicture}[scale=0.5]
\tikzset{
	every state/.style={draw=blue!50,very thick,fill=blue!20,scale=0.9},
	fleche/.style={->, >=latex}
}

\node[state] (a) at (-4,0) {$1$};
\node[state] (b) at (0,0) {$0$};
\node[state] (c) at (4,0) {$0$};

\path[fleche]     (b) edge [bend left] node [above] {$-1$} (c);
\path[fleche]     (c) edge [bend left] node [below] {$0$} (b);
\path[fleche]     (b) edge [bend right] node [above] {$1$} (a);
\path[fleche]     (a) edge [bend right] node [below] {$0$} (b);
\end{tikzpicture}
\end{center}

Throughout the paper, we write $V$ and $E$ for the sets of vertices
and directed edges in a mean-payoff parity game graph, $\pi(v)$ for
the priority of a vertex $v \in V$, and $c(v, u)$ for the cost of an 
edge $(v, u) \in E$.
Vertex priorities are positive integers no larger than~$d$, which we 
assume throughout the paper to be a positive even integer,
edge costs are integers whose absolute value does not exceed the 
positive integer~$C$,
and we write~$n$ and~$m$ for the numbers of vertices and edges in the
graph, respectively. 

Several variants of the algorithmic problem of
\emph{solving mean-payoff parity games} have been considered in the
literature.
The input always includes a game graph as described above.
The \emph{value} of (a vertex in) a mean-payoff parity game is defined 
as $\infty$ if Con does not have a winning strategy for the parity 
condition, and otherwise the smallest mean payoff that Con can secure
while playing so as to satisfy the parity condition. 
(Note that the paper that introduced mean-payoff parity games~\cite{CHJ05} 
defined Con to be the maximizer and not, as we do, the minimizer of the 
mean payoff. 
The two definitions are straightforwardly inter-reducible;
the choice we made allows for a better alignment of our key notion of
a mean-payoff parity progress measure with the literature on energy
progress measures~\cite{BCDGR11}.) 
The \emph{value problem} is to compute the value of every vertex.
The \emph{threshold problem} is, given an additional (rational) number
$\theta$ as a part of the input, to compute the set of vertices with
finite value (strictly) less than~$\theta$.
(Note that a value of a vertex is not finite, i.e., it is~$\infty$, if
and only if Con does not have a winning strategy for his parity
condition, which can be checked in quasi-polynomial
time~\cite{CJKLS17,JL17}.) 
In the \emph{zero-threshold problem} the threshold number~$\theta$ is
assumed to be~$0$. 

As Chatterjee et al.~\cite[Theorem 10]{CHS17} have shown, the
threshold problem can be used to solve the value problem at the cost
of increasing the running time by the modest $O(n \cdot \log(nC))$
multiplicative term.
Their result, together with a routine linear-time reduction from the
threshold problem to the zero-threshold problem (subtract $\theta$
from costs of all edges), motivate us to focus on solving the
zero-threshold problem in this paper. 
For brevity, we will henceforth write ``mean-payoff condition''
instead of ``zero-threshold mean-payoff condition''. 

The roles of the two players in a mean-payoff parity
game are not symmetric for several reasons. 
One is that Con aims to satisfy a \emph{conjunction} of the parity
condition and of the mean-payoff condition, while the
goal of Dis is to satisfy a \emph{disjunction} of the negated
conditions.  
The other one is that negations of the parity condition and of the 
mean-payoff condition are not literally the parity and the mean-payoff
conditions, respectively:  
the negation of the parity condition swaps the roles of even and odd, 
and the negation of the strict (``less than'') mean-payoff condition
is non-strict (``at least'').   
The former asymmetry (conjunction vs disjunction) is material and
our treatments of strategy construction for players Con and Dis differ
substantially, but the latter are technically benign.
The discussion above implies that the goal of player Dis is to either
satisfy the parity condition in which the highest priority that occurs
infinitely many times is even, or to satisfy the ``at least''
zero-threshold mean-payoff condition.  

\subsection{Related work}

Mean-payoff games have been studied since 1960's and there is a rich body
of work on them in the stochastic games literature.
We selectively mention the positional determinacy result of Ehrenfeucht and 
Mycielski~\cite{EM79} (i.e., that positional optimal strategies exist for
both players), and the work of Zwick and Paterson~\cite{ZP96}, who pointed out
that positional determinacy implies that deciding the winner in mean-payoff 
games is both in NP and in co-NP, and gave a pseudo-polynomial algorithm 
for computing values in mean-payoff games that runs in time $O(mn^3C)$.
Brim et al.~\cite{BCDGR11} introduced energy progress measures as natural
witnesses for winning strategies in closely related energy games, they 
developed a lifting algorithm to compute the least energy progress measures,
and they observed that this leads to an algorithm for computing values in 
mean-payoff games whose running time is $O(mn^2C \cdot \log(nC))$, which is 
better than the algorithm of Zwick and Paterson~\cite{ZP96} if $C = 2^{o(n)}$. 
Comin and Rizzi~\cite{CR17} have further refined the usage of the lifting 
algorithm for energy games achieving running time $O(mn^2C)$. 

Parity games have been studied in the theory of automata on infinite 
trees, fixpoint logics, and in verification and synthesis since early 
1990's~\cite{EJ91,EJS01}. 
Very selectively, we mention early and influential recursive algorithms by 
McNaughton~\cite{McN93} and by Zielonka~\cite{Zie98}, the running times 
of which are $O(n^{d+O(1)})$. 
The breakthrough result of Calude et al.~\cite{CJKLS17} gave the first 
algorithm that achieved an $n^{o(d)}$ running time. 
Its running time is polynomial $O(n^5)$ if $d \leq \log n$ and 
quasipolynomial $O(n^{\lg d + 6})$ in general. 
(Throughout the paper, we write $\lg x$ to denote $\log_2 x$, 
and we write $\log x$ when the base of the logarithm is moot.) 
Note that Calude et al.'s polynomial bound for $d \leq \log n$ implies 
that parity games are FPT (fixed parameter tractable) when the number~$d$ 
of distinct vertex priorities is the parameter. 
Further analysis by Jurdzi\'nski and Lazi\'c~\cite{JL17} established that 
running times $O(mn^{2.38})$ for $d \leq \lg n$, and 
$O(dmn^{\lg(d/\lg n)+1.45})$ for $d = \omega(\lg n)$, can be achieved using 
their succinct progress measures, and Fearnley et al.~\cite{FJSSW17} obtained 
similar results by refining the technique and the analysis of Calude et 
al.~\cite{CJKLS17}. 
Existence of polynomial-time algorithms for solving parity games and 
for solving mean-payoff games are fundamental long-standing open 
problems~\cite{EJS01,ZP96,Joh07}. 

Mean-payoff parity games have been introduced by Chatterjee et al.~\cite{CHJ05}
as a proof of concept in developing algorithmic techniques for solving 
games (and hence for controller synthesis) which combine qualitative 
(functional) and quantitative (performance) objectives. 
Their algorithm for the value problem is inspired by the recursive algorithms 
of McNaughton~\cite{McN93} and Zielonka~\cite{Zie98} for parity games, from 
which its running time acquires the exponential dependence $mn^{d+O(1)}C$ on 
the number of vertex priorities.  
Chatterjee and Doyen~\cite{CD12} have simplified the approach by considering 
energy parity games first, achieving running time $O(dmn^{d+4}C)$ for the 
threshold problem, which was further improved by Bouyer et al.~\cite{BMOU11} 
to $O(mn^{d+2}C)$ for the  value problem. 
Finally, Chatterjee et al.~\cite{CHS17} have achieved the running time
$O(mn^{d}C \log(nC))$ for the value problem, but their key original 
technical results are for the two special cases of mean-payoff parity games 
that allow only two distinct vertex priorities, for which they achieve 
running time~$O(mnC)$ for the threshold problem, by using amortized analysis 
techniques from dynamic algorithms. 
Note that none of those algorithms escapes the exponential dependence on the 
number of distinct vertex priorities, simply because they all follow the 
recursive structure of the algorithms by McNaughton~\cite{McN93} and by 
Zielonka~\cite{Zie98}. 

Other quantitative extensions of parity games have been considered;
for example, Fijalkow and Zimmermann~\cite{FZ14} introduced 
\emph{parity games with costs}, and Schewe, Weinert, and
Zimmermann~\cite{SWZ18} generalized those to 
\emph{parity games with weights}.  
Chatterjee and Doyen~\cite{CD12} have proved that the problem of
deciding the winner in mean-payoff parity games is log-space
equivalent to the problem of deciding the winner in energy parity
games, and Schewe et al.~\cite{SWZ18} have proved that the latter is 
polynomial-time equivalent to the problem of deciding the winner in
parity games with weights. 
It follows that the three problems, of deciding the winner in mean-payoff parity games, in energy parity games, and in parity games
with weights, respectively, are polynomial-time equivalent.

\subsection{Our contributions}

Our main technical result is the first pseudo-quasi-polynomial
algorithm for solving mean-payoff parity games.
More specifi\-cally, we prove that the threshold problem 
can be solved in pseudo-polynomial time
$mn^{2+o(1)}C$ for $d = o(\log n)$, in pseudo-polynomial time 
$mn^{O(1)}C$ if $d = O(\log n)$ 
(where the constant in the exponent of~$n$ depends logarithmically on the 
constant hidden in the big-Oh expression $O(\log n)$), 
and in pseudo-quasi-polynomial time $O(dmn^{\lg(d/\lg n)+2.45}C)$ 
if $d = \omega(\log n)$. 
By~\cite[Theorem 10]{CHS17}, we obtain running times for solving the value 
problem that are obtained from the ones above by multiplying them by the 
$O(n \log(nC))$ term. 

Our key conceptual contributions are the notions of strategy decompositions 
for both players in mean-payoff parity games, and of mean-payoff parity
progress measures.
The former explicitly reveal the underlying strategy structure of winning 
sets for both players, and they provide normal forms and succinct 
representations for winning strategies. 
The latter provide an alternative form of a witness and a normal form 
of winning strategies for player Dis, which make explicit the 
order-theoretic structures that underpin the original progress measure 
lifting algorithms for parity~\cite{Jur00} and energy games~\cite{BCDGR11}, 
respectively, as well as the recent quasi-polynomial succinct progress 
measure lifting algorithm for parity games~\cite{JL17}. 
The proofs of existence of strategy decompositions follow the well-beaten
track of using McNaughton-Zielonka-like inductive arguments, and existence 
of progress measures that witness winning strategies for Dis is established 
by extracting them from strategy decompositions for Dis.

Our notion of mean-payoff parity progress measures combines features of 
parity and energy progress measures, respectively.
Crucially, our mean-payoff progress measures inherit the ordered tree 
structure from parity progress measures, and the additional numerical 
labels of vertices (that capture the energy progress measure aspects) do 
not interfere substantially with it.  
This allows us to directly apply the combinatorial ordered tree coding 
result by Jurdzi\'nski and Lazi\'c~\cite{JL17}, which limits the 
search space in which the witnesses are sought by the lifting procedure 
to a pseudo-quasi-polynomial size, yielding our main result.
The order-theoretic properties that the lifting procedure relies on 
naturally imply the existence of the least 
(in an appropriate order-theoretic sense) progress measure, from which 
a positional winning strategy for Dis on her winning set can be easily 
extracted.

In order to synthesize a strategy decomposition---and hence a winning
strategy---for Con in pseudo-quasi-polynomial time, we take a different 
approach. 
Progress measures for games typically yield positional winning strategies
for the relevant player~\cite{KK95,Jur00,BCDGR11}, but optimal strategies
for Con in mean-payoff parity games may require finite memory (or even infinite memory in the variant where Con has to ensure a non-positive mean-payoff~\cite{CHJ05}).  
That motivates us to forgo attempting to pin a notion of progress measures 
to witness winning strategies for Con. 
We argue, instead, that a McNaughton-Zielonka-style recursive procedure 
can be modified to run in pseudo-quasi-polynomial time and produce 
a strategy decomposition of Con's winning set.
The key insight is to avoid invoking some of the recursive calls, and 
instead to replace them by invocations of the pseudo-quasi-polynomial 
lifting procedure for Dis, merely to compute the winning set for 
Dis---and hence also for Con, because by determinacy Con has a winning 
strategy whenever Dis does not. 
As a result, each invocation of the recursive procedure only makes 
recursive calls on disjoint subgames, which makes it perform only a 
polynomial number of steps other than invocations of the lifting 
procedure, overall yielding a pseudo-quasi-polynomial algorithm.

Note that our pseudo-quasi-polynomial algorithm for mean-payoff parity
games can be used to solve energy
parity games and parity games with weights in pseudo-quasi-polynomial time. Indeed, deciding the
winner in the latter two classes of games is polynomial-time
equivalent to deciding the winner in mean-payoff games by the results
of Chatterjee and Doyen~\cite{CD12} and Schewe et al.~\cite{SWZ18},
respectively.

\paragraph*{Organisation of the paper.}
In Section~\ref{section:strategy-decompositions}, we define 
strategy decompositions for Dis and Con, and we prove that they 
exist if and only if the respective player has a winning strategy.
In Section~\ref{section:progress-measures}, we define progress 
measures for Dis, and we prove that such a progress measure exists 
if and only if Dis has a strategy decomposition. 
In Section~\ref{section:lifting}, we give a 
pseudo-quasi-polynomial lifting algorithm for computing the least 
progress measure, from which a strategy decomposition for Dis 
of her winning set, and the winning set for Con, can be derived.
In Section~\ref{section:winsets-to-decompositions}, we show how
to also compute a strategy decomposition for Con on his winning set 
in pseudo-quasi-polynomial time, using the lifting procedure to speed 
up a NcNaughton-Zielonka-style recursive procedure.

\section{Strategy decompositions}
\label{section:strategy-decompositions}

In this section we introduce our first key concept of 
\emph{strategy decompositions} for each of the two players.
They are hierarchically defined objects, of size polynomial in the
number of vertices in the game graph, that witness existence of 
winning strategies for each of the two players on their winning 
sets.
Such decompositions are implicit in earlier literature, in particular
in algorithms for mean-payoff parity games~\cite{CHJ05,CD12,BMOU11,CHS17} 
that follow the recursive logic of McNaughton's~\cite{McN93} and 
Zielonka's~\cite{Zie98} algorithms for parity games. 
We make them explicit because we belive that it provides conceptual
clarity and technical advantages. 
Strategy decompositions pinpoint the recursive strategic structure of the
winning sets in mean-payoff parity games 
(and, by specialization, in parity games too), which may provide 
valuable insights for future work on the subject. 
What they allow us to do in this work is to streamline the proof that
the other key concept we introduce---mean-payoff parity progress 
measures---witness existence of winning strategies for Dis.

We define the notions of strategy decompositions for Dis and for Con, 
then in Lemmas~\ref{lemma:strategy-for-Dis} and~\ref{lemma:strategy-for-Con}
we prove that the decompositions naturally yield winning strategies for
the corresponding players, and finally in 
Lemma~\ref{lemma:existence-strategy-decomposition} we establish that 
in every mean-payoff game, both players have strategy decompositions of
their winning sets. 
The proofs of all three lemmas mostly use well-known inductive 
McNaughton-Zielonka-type arguments that should be familiar to anyone 
who is conversant in the existing literature on mean-payoff parity games.
We wish to think that for a curious non-expert, this section offers 
a streamlined and self-contained exposition of the key algorithmic
ideas behind earlier works on mean-payoff parity 
games~\cite{CHJ05,CD12,BMOU11}.

\subsection{Preliminaries}

Notions of strategies, positional strategies, plays, plays consistent
with a strategy, winning strategies, winning sets, reachability 
strategies, traps, mean payoff, etc., are defined in the usual way. 
We forgo tediously repeating the definitions of those common and 
routine concepts, referring a non-expert but interested reader to 
consult the (typically one-page) Preliminaries or Definitions 
sections of any of the previously published papers on mean-payoff 
parity games~\cite{CHJ05,CD12,BMOU11,CHS17}.
One notable difference between our set-up and those found in the
above-mentioned papers is that for an infinite sequence of numbers 
$\seq{c_1, c_2, c_3, \dots}$, we define its mean payoff 
to be $\limsup_{n \to \infty} (1/n) \cdot \sum_{i=1}^n c_i$, rather 
than the more common  
$\liminf_{n \to \infty} (1/n) \cdot \sum_{i=1}^n c_i$;
this is because we chose Con to be the minimizer of the mean payoff, 
instead of the typical choice of making him the maximizer.

\subsection{Strategy decompositions for Dis}
\label{subsection:strategy-decomposition-for-or}

Let $W \subseteq V$ be a subgame (\ie~a non empty induced subgraph of $V$ with no deadend) in which the biggest vertex priority 
is~$b$. 
We define strategy decompositions for Dis by induction on~$b$ and the
size of~$W$.
We say that $\omega$ is a $b$-decomposition of
$W$ for Dis if the following conditions hold (pictured in Figure~\ref{figure:decompositionDis}).
\begin{enumerate}
\item
  If $b$ is even then
  $\omega = \big((R, \omega'), (T, \tau), B\big)$,  
  such that:
  \begin{enumerate}
  \item
    sets $R$, $T$, and~$B \not= \emptyset$ are a partition of~$W$; 
  \item
    $B$ is the set of vertices of the top priority~$b$
    in~$W$;   
  \item
    $\tau$ is a positional reachability strategy for Dis from~$T$
    to~$B$ in~$W$;  
  \item
    $\omega'$ is a $b'$-decomposition of~$R$ for~Dis, where $b' < b$. 
  \end{enumerate}
\item
  If $b$ is odd then
  $\omega = \big((U, \omega''), (T, \tau), (R, \omega')\big)$, such
  that:
  \begin{enumerate}
  \item
    sets $U$, $T$, and~$R \not= \emptyset$ are a partition of~$W$; 
  \item
    \label{item:even-recurrentDis}
    $\omega'$ is either: 
    \begin{enumerate}
    \item
    \label{item:win-by-inductionDis}
      a $b'$-decomposition of~$R$ for Dis, where $b' < b$; or
    \item
      \label{item:win-by-energyDis}
      a positional strategy for Dis that is mean-payoff winning for
      her on~$R$; 
    \end{enumerate}
  \item
    $\tau$ is a positional reachability strategy for Dis
    from~$T$ to~$R$ in~$W$;
  \item
    $\omega''$ is a $b''$-decomposition of~$U$ for Dis, where
    $b'' \leq b$;   
  \item
    $R$ is a trap for Con in~$W$. 
  \end{enumerate}
\end{enumerate}
We say that a subgame $W$ has a strategy decomposition for Dis if it has a
$b$-decomposition for some~$b$.
A heuristic, if somewhat non-standard, way to think about 
sets~$T$ and~$R$ in the above definition is that sets denoted 
by~$T$ are \emph{transient} and sets denoted by~$R$ are
\emph{recurrent}. 
The meanings of those words here are different than in, say, 
Markov chains, and refer to strategic, rather than 
probabilistic, properties.

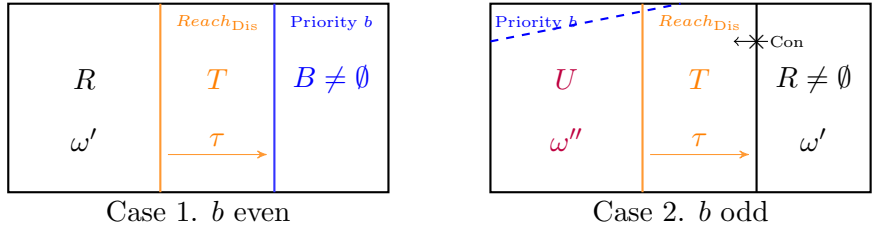
\begin{figure}
\begin{center}
\begin{tikzpicture}
\draw[thick] (0,0) rectangle (5,2.5);
\draw[blue!80, thick] (3.5,0) -- (3.5,2.5);
\draw[orange!80, thick] (2,0) -- (2,2.5);
\node[] at (4.25,2.25) {\textcolor{blue}{\tiny Priority $b$}};
\node[] at (4.25,1.5) {\textcolor{blue}{$B \neq \emptyset$}};
\node[] at (2.75,2.25) {\textcolor{orange}{\tiny $Reach_\text{Dis}$}};
\node[] at (2.75,1.5) {\textcolor{orange}{$T$}};
\node[] at (2.75,0.7) {\textcolor{orange}{$\tau$}};
\draw[->, >=stealth', orange!80] (2.1,0.5) -- (3.4,0.5);
\node[] at (1,1.5) {$R$};
\node[] at (1,0.7) {$\omega'$};
\node[] at (2.5,-0.25) {Case 1. $b$ even};
\end{tikzpicture} \hspace{1cm}
\begin{tikzpicture}
\draw[thick] (0,0) rectangle (5,2.5);
\draw[thick] (3.5,0) -- (3.5,2.5);
\draw[orange!80, thick] (2,0) -- (2,2.5);
\node[] at (4.25,1.5) {$R \neq \emptyset$};
\node[] at (4.25,0.7) {$\omega'$};
\node[] at (2.75,2.25) {\textcolor{orange}{\tiny $Reach_\text{Dis}$}};
\node[] at (2.75,1.5) {\textcolor{orange}{$T$}};
\node[] at (2.75,0.7) {\textcolor{orange}{$\tau$}};
\draw[->, >=stealth', orange!80] (2.1,0.5) -- (3.4,0.5);
\node[] at (1,1.5) {\textcolor{purple}{$U$}};
\node[] at (1,0.7) {\textcolor{purple}{$\omega''$}};
\draw[dashed, blue, thick] (0,2) -- (2.5,2.5);
\node[] at (0.6,2.25) {\textcolor{blue}{\tiny Priority $b$}};
\node[] at (3.9,2)  {\tiny Con};
\draw[->] (3.65,2) -- (3.2,2);
\node[] at (3.5,2) {$\times$};
\node[] at (2.5,-0.25) {Case 2. $b$ odd};
\end{tikzpicture}
\end{center}
\caption{\label{figure:decompositionDis}Strategy decompositions for Dis.}
\end{figure}

Given a strategy decomposition~$\omega$ for Dis, we inductively define a
positional strategy $\sigma(\omega)$ for Dis in the following way: 
\[
\sigma(\omega) =
\begin{cases}
  \sigma(\omega') \cup \tau \cup \beta &
    \text{if } \omega = \big((R, \omega'), (T, \tau), B\big), \\
  \sigma(\omega'') \cup \tau \cup \sigma(\omega') &
    \text{if } \omega =
    \big((U, \omega''), (T, \tau), (R, \omega')\big),     
\end{cases}
\]
where $\beta$ is an arbitrary positional strategy for Dis on~$B$,
and $\sigma(\omega') = \omega'$ in case~\ref{item:win-by-energyDis}. 

\begin{lemma}
  \label{lemma:strategy-for-Dis}
  If $\omega$ is a strategy decomposition of~$W$ for Dis
  and $W$ is a trap for Con, then $\sigma(\omega)$ is a
  positional winning strategy for Dis from every vertex in~$W$. 
\end{lemma}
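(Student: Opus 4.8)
The plan is to prove, by induction on the pair $(b, |W|)$ ordered lexicographically, the slightly stronger statement that \emph{every infinite play that stays in $W$ and is consistent with $\sigma(\omega)$ is won by Dis}. This strengthening is what makes the induction compose: when we descend into a subregion on which Dis plays a sub-strategy of $\sigma(\omega)$, that subregion will in general be a trap for Dis rather than for Con, so we cannot invoke the lemma verbatim; instead we will exhibit the relevant suffix of the play as one that provably remains in the subregion, and apply the strengthened claim there. The lemma itself then follows immediately: $\sigma(\omega)$ is positional and, at every Dis-vertex of $W$, chooses an edge staying in $W$, while $W$ being a trap for Con forbids Con from leaving $W$; hence every play consistent with $\sigma(\omega)$ from a vertex of $W$ stays in $W$, and is therefore winning by the strengthened statement.

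For the inductive step in Case~1 ($b$ even), let $\rho$ be a play that stays in $W$ and is consistent with $\sigma(\omega) = \sigma(\omega') \cup \tau \cup \beta$. I would split on whether $\rho$ visits $B$ infinitely often. If it does, then since $B$ consists exactly of the vertices of the top priority $b$ of $W$ and every vertex of $\rho$ lies in $W$, the largest priority seen infinitely often along $\rho$ is precisely $b$, which is even, so Dis wins the parity condition. If not, let $\rho'$ be the suffix of $\rho$ after its last visit to $B$; it stays in $T \cup R$. Here I use the defining property of the reachability (attractor) strategy $\tau$: a play that is in $T$ and follows $\tau$ stays within $T$ until it reaches $B$, which happens after finitely many steps. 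Since $\rho'$ avoids $B$, it can never enter $T$, and so $\rho'$ stays in $R$ and is consistent with $\sigma(\omega')$. As $\omega'$ is a $b'$-decomposition of $R$ with $b' < b$, the pair $(b', |R|)$ is lexicographically smaller than $(b, |W|)$, so the induction hypothesis applies to $\rho'$, and because both winning conditions are invariant under finite prefixes, $\rho$ is won by Dis as well.

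In Case~2 ($b$ odd), with $\sigma(\omega) = \sigma(\omega'') \cup \tau \cup \sigma(\omega')$, I would split on whether $\rho$ ever enters $R$. If it does, then because $R$ is a trap for Con and $\sigma(\omega')$ keeps Dis inside $R$, the play remains in $R$ from that point on; its suffix in $R$ is consistent with $\sigma(\omega')$, so if $\omega'$ is a $b'$-decomposition with $b' < b$ the induction hypothesis applies (again $(b', |R|)$ is lexicographically smaller), while in case~\ref{item:win-by-energyDis}, where $\sigma(\omega') = \omega'$ is a positional mean-payoff winning strategy for Dis on $R$, that suffix has mean payoff at least $0$; either way Dis wins. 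If $\rho$ never enters $R$, then, again by the attractor property of $\tau$ (which drives $T$ into $R$ while keeping the play in $T$ until $R$ is hit), $\rho$ can never enter $T$, so it stays in $U$ and is consistent with $\sigma(\omega'')$. Here $\omega''$ is a $b''$-decomposition of $U$ with merely $b'' \le b$, so the first coordinate need not decrease; the second does, since $R \neq \emptyset$ forces $|U| < |W|$, which is exactly why the induction is carried out on the lexicographic pair $(b, |W|)$ and not on $b$ alone. The induction hypothesis then gives that $\rho$ is won by Dis.

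I expect the main obstacle to be the bookkeeping around plays that would leave a recursive subregion. Concretely, the crux is the repeated use of the attractor property of $\tau$---that a play entering the transient set $T$ and following $\tau$ stays in $T$ until reaching the target set ($B$ in Case~1, $R$ in Case~2)---to conclude that a play which \emph{avoids} the target can never enter $T$ at all, so that its behaviour is governed entirely by the sub-decomposition on the recurrent side. Getting this precise is what licenses applying the strengthened induction hypothesis to bona fide plays inside the subregion, even though that subregion is a trap for Dis rather than for Con. The remaining ingredients---prefix-invariance of both the parity and the mean-payoff conditions, the verification that $\sigma(\omega)$ keeps each of its constituent sub-strategies within its own part of the partition, and the observation that the only genuine base case is case~\ref{item:win-by-energyDis}---are routine.
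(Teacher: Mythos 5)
Your proof is correct and takes essentially the same route as the paper's: the same induction with the same case split on the parity of~$b$, the same analysis of whether the play visits $B$ infinitely often (Case~1) or ever enters $R$ (Case~2), the same appeal to the attractor property of~$\tau$, and prefix-independence of the two winning conditions. Your two refinements---strengthening the inductive statement to all plays that stay in~$W$ (thereby avoiding invoking the lemma verbatim on $R$ or $U$, which are traps for Dis rather than for Con), and ordering the induction lexicographically on $(b, |W|)$ (which covers the situation $U = T = \emptyset$, $R = W$, where the vertex count does not decrease)---tighten two points that the paper's induction on $|W|$ alone passes over silently.
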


\begin{proof}
  We proceed by induction on the number of vertices in $W$. 
  The reasoning involved in the base cases 
  (when $R = \emptyset$ or $U = \emptyset$) 
  is analogous and simpler than in the inductive cases, 
  hence we immediately proceed to the latter. 

  We consider two cases based on the parity of the biggest vertex 
  priority~$b$ in~$W$. 
  
  First, assume that $b$ is even and let 
  $\omega = \big((R, \omega'), (T, \tau), B\big)$ be a $b$-decomposition
  of~$W$.
  We argue that every infinite play consistent with~$\sigma(\omega)$ is
  winning for Dis.
  If it visits vertices in~$B$ infinitely many times then the parity
  condition for Dis is satisfied because~$b$ is the biggest vertex 
  priority and it is even.
  Otherwise, it must be the case that the play visits vertices 
  in~$T \cup B$ only finitely many times, because visiting a vertex
  in~$T$ always leads in finitely many steps to visiting a vertex 
  in~$B$ by following the reachability strategy~$\tau$. 
  Therefore, eventually the play never leaves~$R$ and is consistent with
  strategy~$\sigma(\omega')$, which is winning for Dis by the inductive
  hypothesis.

  Next, assume that $b$ is odd.
  Let $\omega = \big((U, \omega''), (T, \tau), (R, \omega')\big)$
  be a $b$-decomposition.
  We argue that every infinite play consistent with~$\sigma(\omega)$ is
  winning for Dis.
  If it visits $T \cup R$, then by following strategy~$\tau$, it 
  eventually reaches and never leaves~$R$ 
  (because $R$ is a trap for Con), and hence it is winning for
  Dis because $\sigma(\omega')$ is a winning strategy for Dis by the
  inductive hypothesis, or by condition~\ref{item:win-by-energyDis}. 
  Otherwise, if such a play never visits $T \cup R$ then it is winning
  for Dis because $\sigma(\omega'')$ is a winning strategy for Dis by
  the inductive hypothesis.
\end{proof}

\subsection{Strategy decompositions for~Con}
\label{subsection:strategy-decomposition-for-square}

Let $W \subseteq V$ be a subgame in which the biggest vertex priority
is~$b$. 
We define strategy decompositions for Con by induction on~$b$ and the
size of~$W$.
We say that $\omega$ is a $b$-decomposition of~$W$
for Con if the following conditions hold (pictured in Figure~\ref{figure:decompositionCon}).
\begin{enumerate}
\item
  If $b$ is odd then
  $\omega = \big((R, \omega'), (T, \tau), B, \lambda\big)$,   
  such that:
  \begin{enumerate}
  \item
    sets $R$, $T$, and~$B \not= \emptyset$ are a partition of~$W$; 
  \item
    $B$ is the set of vertices of priority~$b$ in~$W$; 
  \item
    $\tau$ is a positional reachability strategy for Con
    from~$T$ to~$B$ in~$W$; 
  \item
    $\omega'$ is a $b'$-decomposition of~$R$ for Con,
    where $b' < b$;
    \item
  \label{item:energy-strategyCon}
  $\lambda$ is a positional strategy for Con that is mean-payoff
  winning for him on~$W$. 
  \end{enumerate}
\item
  If $b$ is even then
  $\omega = \big((U, \omega''), (T, \tau), (R, \omega')\big)$, 
  such that:
  \begin{enumerate}
  \item
    sets $U$, $T$, and~$R \not= \emptyset$ are a partition of~$W$; 
  \item
    \label{item:even-recurrentCon}
    $\omega'$ is a $b'$-decomposition of~$R$ for Con,
    where $b' < b$; 
  \item
    $\tau$ is a positional reachability strategy for Con from~$T$
    to~$R$ in~$W$; 
  \item
    $\omega''$ is a $b''$-decomposition of~$U$ for Con,
    where $b'' \leq b$; 
  \item
    $R$ is a trap for Dis in~$W$.
    \end{enumerate}  
\end{enumerate}
We say that a subgame has a strategy decomposition for Con if
it has a $b$-decomposition for some~$b$.
Note that the definition is analogous to that of a strategy
decomposition for Dis in most aspects, with the following
differences:  
\begin{itemize}
\item
  the roles of Dis and Con, and of even and odd, are swapped; 
\item
  the condition~\ref{item:even-recurrentDis} is simplified;
\item
  an extra component~$\lambda$, and the 
  condition~\ref{item:energy-strategyCon}, are added.
\end{itemize}

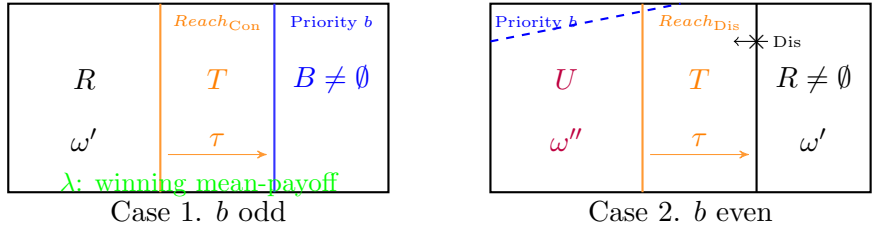
\begin{figure}
\begin{center}
\begin{tikzpicture}
\draw[thick] (0,0) rectangle (5,2.5);
\draw[blue!80, thick] (3.5,0) -- (3.5,2.5);
\draw[orange!80, thick] (2,0) -- (2,2.5);
\node[] at (4.25,2.25) {\textcolor{blue}{\tiny Priority $b$}};
\node[] at (4.25,1.5) {\textcolor{blue}{$B \neq \emptyset$}};
\node[] at (2.75,2.25) {\textcolor{orange}{\tiny $Reach_\text{Con}$}};
\node[] at (2.75,1.5) {\textcolor{orange}{$T$}};
\node[] at (2.75,0.7) {\textcolor{orange}{$\tau$}};
\draw[->, >=stealth', orange!80] (2.1,0.5) -- (3.4,0.5);
\node[] at (1,1.5) {$R$};
\node[] at (1,0.7) {$\omega'$};
\node[] at (2.5,0.1) {\small \textcolor{green}{$\lambda$: winning mean-payoff}};
\node[] at (2.5,-0.25) {Case 1. $b$ odd};
\end{tikzpicture} \hspace{1cm}
\begin{tikzpicture}
\draw[thick] (0,0) rectangle (5,2.5);
\draw[thick] (3.5,0) -- (3.5,2.5);
\draw[orange!80, thick] (2,0) -- (2,2.5);
\node[] at (4.25,1.5) {$R \neq \emptyset$};
\node[] at (4.25,0.7) {$\omega'$};
\node[] at (2.75,2.25) {\textcolor{orange}{\tiny $Reach_\text{Dis}$}};
\node[] at (2.75,1.5) {\textcolor{orange}{$T$}};
\node[] at (2.75,0.7) {\textcolor{orange}{$\tau$}};
\draw[->, >=stealth', orange!80] (2.1,0.5) -- (3.4,0.5);
\node[] at (1,1.5) {\textcolor{purple}{$U$}};
\node[] at (1,0.7) {\textcolor{purple}{$\omega''$}};
\draw[dashed, blue, thick] (0,2) -- (2.5,2.5);
\node[] at (0.6,2.25) {\textcolor{blue}{\tiny Priority $b$}};
\node[] at (3.9,2)  {\tiny Dis};
\draw[->] (3.65,2) -- (3.2,2);
\node[] at (3.5,2) {$\times$};
\node[] at (2.5,-0.25) {Case 2. $b$ even};
\end{tikzpicture}
\end{center}
\caption{\label{figure:decompositionCon}Strategy decompositions for Con.}
\end{figure}

Given a strategy decomposition~$\omega$ for Con, 
we inductively define a strategy $\sigma(\omega)$ 
for Con in the following way: 
\begin{itemize}
\item 
  If $b$ is odd and $\omega = ((R,\omega'),(T,\tau),B,\lambda)$, 
  then the strategy proceeds in (possibly infinitely many) rounds.
  Round~$i$, for $i = 1, 2, 3, \dots$, involves the following steps:
  \begin{enumerate}
  \item 
    if starting in $R$, follow $\sigma(\omega')$ for as long as staying in $R$;
  \item 
    if starting in $T$, or having arrived there from~$R$, 
    follow~$\tau$ until $B$ is reached;
  \item once $B$ is reached, follow $\lambda$ for $n + (2n + 3^n + 2)nC$ steps and 
    proceed to round~$i+1$. 
  \end{enumerate}
\item If $b$ is even and $\omega = ((U,\omega''),(T,\tau),(R,\omega'))$, then let:
  \[
  \sigma(\omega) = \sigma(\omega'') \cup \tau \cup \sigma(\omega'). 
  \]
\end{itemize}

\begin{lemma}
  \label{lemma:strategy-for-Con}
  If $\omega$ is a strategy decomposition of~$W$ for Con 
  and $W$ is a trap for Dis, then $\sigma(\omega)$ is a 
  winning strategy for Con from every vertex in~$W$.
\end{lemma}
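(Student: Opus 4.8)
The plan is to prove a strengthened statement by induction on the number of vertices in~$W$, and to read the lemma off from it. Since the recursive constituents $R$ (in the odd case) and $U$ (in the even case) are traps for Con rather than for Dis, the hypothesis of the lemma does not transfer verbatim to the recursive calls, so I would phrase the inductive claim about \emph{plays that remain inside the relevant subgame}. Concretely, for every strategy decomposition~$\omega$ of a subgame~$W$ for Con I would establish: (P) every infinite play that stays in~$W$ and is consistent with~$\sigma(\omega)$ satisfies Con's parity condition; and (E) there are constants $\delta(\omega) \geq 1/(2n) > 0$ and $\beta(\omega)$, bounded by a fixed polynomial in~$n$ and~$C$, such that every finite play of length~$\ell$ that stays in~$W$ and is consistent with~$\sigma(\omega)$ has total cost at most $-\delta(\omega)\,\ell + \beta(\omega)$. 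Property~(E) is the crucial strengthening: it yields the mean-payoff condition at once, because any infinite play staying in~$W$ satisfies $\frac{1}{\ell}\sum_{i=1}^{\ell} c_i \leq -\delta(\omega) + \beta(\omega)/\ell$, whose $\limsup$ is at most $-\delta(\omega) < 0$. The lemma then follows because the hypothesis that $W$ is a trap for Dis, together with the fact that every component of~$\sigma(\omega)$ (namely $\tau$, $\lambda$, and the recursive strategies on subsets of~$W$) moves within~$W$, forces every $\sigma(\omega)$-consistent play from~$W$ to stay in~$W$ forever, so (P) and (E) apply.

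\textbf{The even case.}
For $\sigma(\omega) = \sigma(\omega'') \cup \tau \cup \sigma(\omega')$ I would first note that $T \cup R$ is Con's attractor to~$R$, so $U$ is a trap for Con and every $\tau$-consistent play from~$T$ reaches~$R$ within $n$ steps without returning to~$U$; combined with $R$ being a trap for Dis, every play decomposes into at most one $U$-phase, one $T$-phase, and a final $R$-phase from which it never escapes. Then (P) is the same case split as in Lemma~\ref{lemma:strategy-for-Dis}: a play either stays in~$U$ forever (winning by the inductive hypothesis for~$\omega''$) or eventually enters and stays in~$R$ (winning by the inductive hypothesis for~$\omega'$); both $U$ and $R$ are strictly smaller than~$W$ since $R \neq \emptyset$. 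For (E) I would add the two sub-phase energy bounds and absorb the at-most-$n$ steps of the $T$-phase (cost at most $nC$) into the additive constant, taking $\delta(\omega) = \min(\delta(\omega''), \delta(\omega'))$.

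\textbf{The odd case (the main obstacle).}
Here $\sigma(\omega)$ proceeds in rounds, and a play staying in~$W$ follows one of two patterns. If it eventually remains in~$R$ forever, its tail is consistent with~$\sigma(\omega')$ and the inductive hypothesis for~$\omega'$ (valid since $R \subsetneq W$ because $B \neq \emptyset$) gives (P) and (E) for the tail, hence for the play. Otherwise the play completes infinitely many rounds, so it visits~$B$ infinitely often; as $b$ is odd and is the top priority in~$W$, this secures (P). The difficulty is (E). A single round consists of a possibly long $R$-phase governed by the \emph{non-positional} strategy~$\sigma(\omega')$, a short $T$-phase of at most~$n$ steps, and a $\lambda$-phase of exactly $N = n + (2n + 3^n + 2)nC$ steps. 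I would bound the $R$-phase cost by $-\delta(\omega')\,r + \beta(\omega')$ via the inductive hypothesis~(E) for~$\omega'$ (this is precisely why (E), rather than a bare ``the mean payoff is negative'', must be carried through the induction), the $T$-phase cost by $nC$, and the $\lambda$-phase cost by roughly $-N/n + (n-1)C$, the last because a positional mean-payoff-winning~$\lambda$ has only strictly negative, hence $\leq -1$, cycles within~$W$, so any length-$N$ walk consistent with it loses at least about $N/n$ regardless of Dis.

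\textbf{Calibration of $N$ and the delicate point.}
The calibration of~$N$ is the crux: I would choose it (the stated value is more than sufficient) so that the $\lambda$-phase overpays the round's additive constants $\beta(\omega') + nC$, making each \emph{complete} round of length~$\ell_i$ cost at most $-\delta(\omega)\,\ell_i$ with $\delta(\omega) = \min(\delta(\omega'), 1/(2n))$ and \emph{no} positive offset. Summing over complete rounds then telescopes cleanly, and the cost of the possibly incomplete final round is bounded in each of its three phases by $-\delta(\omega)$ times its length plus a fixed constant, using the same three ingredients; adding these gives $-\delta(\omega)\,\ell + \beta(\omega)$ and closes the induction. The subtle point I would be most careful about is that a long $R$-phase must contribute \emph{negatively}, not merely boundedly: this is exactly the role of the drift term $-\delta(\omega')\,r$ in~(E). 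Without it, unboundedly long $R$-phases could let the long-run average creep up to~$0$, and the $\limsup$ mean-payoff argument would break down.
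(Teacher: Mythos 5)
Your proof is correct, and it shares the paper's skeleton: induction on the size of the subgame, the same case split on the parity of the top priority, the same decomposition of plays into a $U$/$T$/$R$ pattern in the even case and into rounds $p_1 p_2 p_3$ in the odd case, the same cycle-decomposition estimate for the $\lambda$-phase, and the same calibrating role for the round length $N$. Where you genuinely depart from the paper is in the inductive strengthening, and your choice is the stronger and, in fact, the safer one. The paper strengthens the lemma only by an \emph{upper} bound on the total cost of finite consistent plays (it proves the bound $(n_W + 3^{n_W})C$), then argues that every completed round has negative total cost, and asserts that this suffices for the lim-sup mean payoff to be negative. That sufficiency claim is where the paper's argument is fragile: since Dis decides how long each $R$-phase lasts, rounds can be unboundedly long, and a per-round total cost bounded by a fixed negative constant (which is what the paper derives --- a constant, not a quantity proportional to the round's length) only yields $\limsup_t S_t/t \leq 0$, not $< 0$: the running average can creep up to zero along ever-longer rounds. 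This is exactly the failure mode you isolate at the end of your proposal, and your invariant (E) --- total cost at most $-\delta\,\ell + \beta$ with $\delta \geq 1/(2n)$, carried through the induction so that long $R$-phases pay proportionally --- is precisely what rules it out and makes the conclusion $\limsup \leq -\delta < 0$ immediate. So your route is not merely equivalent: it supplies a justification for a step the paper leaves unsupported (the lemma itself is of course true, and (E) is the natural repair; note that (E) also subsumes the paper's upper-bound invariant, which is still needed to bound the $p_1$ segment of each round). Two minor points: your claim that $\beta$ can be kept polynomial in $n$ and $C$ is right (the additive constants accumulate over a laminar family of subgames, giving $O(n^2C)$), but nothing hinges on it, since the prescribed $N$ contains a $3^n$ term and would absorb an exponential $\beta$ just as well; and your parenthetical assertion that $T \cup R$ is Con's attractor to $R$ is slightly more than the definition of a decomposition guarantees --- what you actually need, and what holds for decompositions as constructed in Lemma~\ref{lemma:existence-strategy-decomposition}, is that $\tau$-consistent plays from $T$ remain in $T \cup R$ until they reach $R$ (the paper's proof makes the same tacit assumption).
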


\begin{proof}
  We proceed by induction on the number of vertices in $W$,
  omitting the base cases 
  (when $R = \emptyset$, or $U = \emptyset$, respectively), 
  since they are analogous and simpler than the inductive cases.
  We strengthen the inductive hypothesis by requiring that: 
  If $\omega$ is a strategy decomposition of~$W$ for Con 
  and $W$ is a trap for Dis, then $\sigma(\omega)$ is a 
  winning strategy for Con from every vertex in~$W$ and the sum of the costs 
  of the edges in any finite 
  play consistent with $\sigma(\omega)$ is bounded by $(n_W+3^{n_W})C$, where $n_W$ is the number 
  of vertices in $W$ (and recall that $C$ is the maximal cost on all the edges).

  We consider two cases based on the parity of~$b$.
  First, assume that $b$ is even and let 
  $\omega = \big((U, \omega''), (T, \tau), (R, \omega')\big)$.
  Observe that a play consistent 
  with~$\sigma(\omega) = \sigma(\omega'') \cup \tau \cup \sigma(\omega')$ 
  either never leaves~$U$, or if it does then after a finite number of 
  steps (following the reachability strategy~$\tau$) it enters~$R$ 
  and then never leaves it because~$R$ is a trap for Dis.
  It follows that the play is winning for Con by the inductive 
  hypothesis, because it is eventually either consistent
  with strategy~$\sigma(\omega'')$ or with~$\sigma(\omega')$. Moreover, 
  let us write $n_U$ (resp. $n_T$, $n_R$) for the number of vertices in $U$ (resp. $T$,
  $R$). Every finite play consistent with such a strategy can be decomposed into
  a play consistent with $\sigma(\omega'')$, a play going from $U$ to $T$, 
  consistent with $\tau$ and reaching $R$ 
  (thus using at most $n_T + 1$ edges)
  and a play consistent with $\sigma(\omega')$ (any of these plays can be empty).
  Suppose that none of those plays is empty (the other cases can be handled similarly). 
  In particular, $n_U$ and $n_R$ are smaller than $n_W$.
  By inductive hypothesis, the sum of the costs 
  of the edges in any of such finite plays is bounded by 
  $(n_U+3^{n_U})C + (n_T + 1)C + (n_R+3^{n_R})C$, and:
  $$(n_U+3^{n_U})C + (n_T + 1)C + (n_R+3^{n_R})C \leq (n_U+ n_T + n_R + 3.3^{n_W-1})C \leq (n_W+3^{n_W})C$$

  Next, assume that $b$ is odd, and let 
  $\omega = \big((R, \omega'), (T, \tau), B, \lambda\big)$
  be a $b$-decomposition.

  
Let us first prove that any infinite play consistent with $\sigma(\omega)$ 
is winning for Con. If after a finite number of steps, 
the play reaches $R$ and never leaves it, then $\sigma(\omega)$ is compatible with $\sigma(\omega')$ which is winning for Con by induction hypothesis (because $B$ is non-empty). Otherwise, vertices in $B$ or $T$ are seen infinitely often. In that case, vertices in $B$ are seen infinitely often (by contradiction, if not, then necessarily we go through point 2. and 3. in the strategy definition a finite number of times, and so after a finite number of steps, the play is in point 1. forever). Because $B$ is the set of vertices of highest priority $b$ which is odd, Con wins the parity game.
Let us prove that the play has also negative mean-payoff. 
Let us write $n_R$ (resp. $n_T$, $n_B$) for the number of vertices in $R$ (resp. $T$,
  $B$).
By hypothesis the play can be decomposed into (infinitely many) finite plays, each of them decomposed into three consecutive (possibly empty) plays $p_1$, $p_2$ and $p_3$ as follows:
\begin{itemize}
\item $p_1$ consists of vertices in $R$ and is consistent with $\sigma(\omega')$ (point $1$.), 
\item $p_2$ goes from $R$ to $B$, consists of vertices in $T$ and is consistent with the reachability strategy to reach $B$. Then it contains at most $n_T+1$ edges and is of cost at most $(n_T+1)C$ (point 2.),
\item $p_3$ is consistent with $\lambda$ and uses $n_W + (2n_W + 3^{n_W} + 2)n_WC$ edges. A negative cycle is thus 
necessarily reached and the sum of the costs of the edges of $p_3$ is at most $n_WC - (2n_W + 3^{n_W} + 2)C$ (point 3.).
\end{itemize}
It is sufficient to prove that such a finite play $p = p_1 p_2 p_3$ has negative mean-payoff. 
By inductive hypothesis,
the sum of the costs of the edges of such a play is at most 
$(n_R+3^{n_R})C + (n_T+1)C + n_WC - (2n_W + 3^{n_W} + 2)C$ which is negative.

It remains to prove that along a finite play the sum of the costs of the edges 
never exceeds $(n_W+3^{n_W})C$, which is true using the decomposition above, the inductive
hypothesis and the fact that the sum of the costs of the edges
on a play consistent with $\lambda$ will never exceed~$n_WC$. Thus,
the maximum cost of such a finite play is $(n_R + 3^{n_R} + n_T + 1 + n_W)C$
which is smaller than $(n_R + 3^{n_W-1} + n_T + 1 + 3^{n_W-1})C$, or again 
$(n_W+3^{n_W})C$.

%
\end{proof}

\subsection{Existence of strategy decompositions}

In the following lemma, we prove that every game can be partitioned
into two sets of vertices, so that there is a strategy decomposition
of one for Dis, and a strategy decomposition of the other one for Con.  
Those sets correspond to the winning sets for Dis and Con,
respectively. 

\begin{lemma}
\label{lemma:existence-strategy-decomposition}
  There is a partition $W_{\mathrm{Dis}}$ and $W_{\mathrm{Con}}$ of~$V$, such that there is a
  strategy decomposition of~$W_{\mathrm{Dis}}$ for Dis 
  (provided $W_{\mathrm{Dis}} \neq \emptyset$) 
  and a strategy decomposition
  of~$W_{\mathrm{Con}}$ for Con (provided $W_{\mathrm{Con}} \neq \emptyset$).  
\end{lemma}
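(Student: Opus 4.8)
The plan is to prove the lemma by a McNaughton--Zielonka-style induction on the pair $(b, |W|)$---the biggest priority and the number of vertices---ordered lexicographically, with the mean-payoff content injected at the odd levels through the positional determinacy of plain mean-payoff games (Ehrenfeucht--Mycielski). Concretely, I would phrase the argument as a recursive procedure $\mathrm{Solve}(W)$, applied to the whole game $V$, that returns a partition $(W_{\mathrm{Dis}}, W_{\mathrm{Con}})$ together with a strategy decomposition of each part for the corresponding player. The induction establishes that these are genuine decompositions, and then Lemmas~\ref{lemma:strategy-for-Dis} and~\ref{lemma:strategy-for-Con}, together with determinacy, confirm that $W_{\mathrm{Dis}}$ and $W_{\mathrm{Con}}$ are exactly the winning sets.

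First, the even case, which is Dis-favourable and hence a pure parity attractor step with no mean-payoff computation. Let $b$ be even and $B$ the set of top-priority vertices. I set $T := \mathrm{Attr}_{\mathrm{Dis}}^W(B) \setminus B$ and recurse on $W' := W \setminus (B \cup T)$, whose biggest priority is strictly below $b$. If Con wins nowhere in $W'$, then all of $W$ is Dis's and $\big((W', \omega'), (T, \tau), B\big)$ is the required $b$-decomposition. Otherwise I take $Y := \mathrm{Attr}_{\mathrm{Con}}^W(A_{\mathrm{Con}})$, where $A_{\mathrm{Con}}$ is Con's winning set in $W'$, recurse on the strictly smaller set $W \setminus Y$, and assemble Con's part as an even-case decomposition with recurrent set $A_{\mathrm{Con}}$ (biggest priority below $b$, a trap for Dis) and transient set $Y \setminus A_{\mathrm{Con}}$. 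The first recursive call lowers $b$ and the second lowers $|W|$, so the lexicographic measure decreases in both; the trap and attractor bookkeeping is entirely standard.

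The odd case carries the mean-payoff content. Let $b$ be odd. I first solve the plain mean-payoff game on $W$, obtaining by positional determinacy the region $P$ where Con secures a negative mean payoff via a positional strategy $\lambda$, and its complement $Q$ where Dis secures a non-negative mean payoff; here $P$ is a trap for Dis and $Q$ a trap for Con. The set $Q$ is immediately a Dis-winning leaf through the mean-payoff option~\ref{item:win-by-energyDis}, since Con cannot achieve negative mean payoff there. On $P$ the mean-payoff outcome is already settled in Con's favour, so the residual game is governed purely by parity; I run the McNaughton--Zielonka recursion as in the even case but with the roles of the players and of even/odd exchanged, the attractor step towards $B \cap P$ strictly lowering the biggest priority in the recursive call (and if $B \cap P = \emptyset$ then $P$ already has smaller top priority). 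The strategy $\lambda$, suitably restricted, supplies the component~$\lambda$ that Con's odd-case decomposition demands over the whole of his winning set, while the Dis-winning subregions found inside $P$ are won \emph{by parity} and therefore carry lower-priority recursive sub-decompositions, which I glue to $Q$ through the $U$/$T$/$R$ hierarchy. The mean-payoff split precedes, and is not itself, a recursive call, so the measure still decreases at every recursion.

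The main obstacle is this odd case, specifically the interleaving of the mean-payoff winning-region computation with the parity attractor recursion. Two points require care. First, I must verify that the recurrent sets remain traps for the opponent in the ambient game even though $Q$ is carved off by a mean-payoff (rather than attractor) criterion: since $P = W \setminus Q$ is obtained from a trap for Con and need not be attractor-closed, a Dis-winning region found inside $P$ is only a trap for Con relative to $W_{\mathrm{Dis}}$---every Con-escape from it lands in $Q$, which is still Dis-winning---and I must check this fits the required partition into $U$, $T$, and $R$. Second, I must ensure that the single positional strategy $\lambda$ certifying Con's mean-payoff objective is consistent with the hierarchical parity strategy extracted from the decomposition; this is precisely what the step counter $n + (2n + 3^n + 2)nC$ in the definition of $\sigma(\omega)$ and the cost bound of Lemma~\ref{lemma:strategy-for-Con} reconcile. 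Once these trap and consistency properties are in place, correctness of the partition follows from Lemmas~\ref{lemma:strategy-for-Dis} and~\ref{lemma:strategy-for-Con} and determinacy, closing the induction.
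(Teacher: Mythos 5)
Your proposal follows essentially the same route as the paper's proof: a McNaughton--Zielonka induction in which the even case is a pure attractor argument and the odd case first invokes positional determinacy of plain mean-payoff games, carving off Dis's mean-payoff winning region as a win-by-energy recurrent set and using the positional strategy $\lambda$ as the extra component of Con's odd-case decomposition; your gluing of the Dis-winning subregions of $P$ to $Q$ (with $Q$ as the recurrent trap $R$, the Dis part of $P$ as the $U$-component, and $T = \emptyset$) is exactly how the paper assembles it, with the paper's nested recursion on $V \setminus R$ merely unrolled by one level. The remaining differences---your lexicographic $(b, |W|)$ induction measure versus the paper's induction on the number of vertices alone, and flagging rather than fully spelling out the final $U$/$T$/$R$ verification---are presentational, not substantive.
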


The proof of Lemma~\ref{lemma:existence-strategy-decomposition} follows 
the usual template of using a McNaughton-Zielonka inductive argument,
as adapted to mean-payoff parity games by Chatterjee et al.~\cite{CHJ05}, and 
then simplified for threshold mean-payoff parity games by 
Chatterjee et al.~\cite[Appendix~C]{CHS17}. 

\begin{proof}
  The proof is by induction on the size of the game graph. 
  We strengthen the induction hypothesis by also requiring 
  that $W_{\mathrm{Dis}}$ and $W_{\mathrm{Con}}$ are traps 
  in~$V$ for respectively Con and Dis. 
  The base case of one vertex is straightforward. 
  Let $b$ be the highest vertex priority, and let $B$ be the set of
  vertices of the highest priority~$b$.
  We consider two cases depending on the parity of~$b$. 
  
  \medskip
  
  The first case is when $b$ is even.
  Let $T$ be the set of vertices (not including vertices in~$B$) from
  which Dis has a strategy to reach a vertex in~$B$, and let $\tau$ be
  a corresponding positional reachability strategy. 

  Let $R = V \setminus (B \cup T)$. 
  By the inductive hypothesis, there is a partition $W'_{\mathrm{Dis}}$ and $W'_{\mathrm{Con}}$
  of~$R$, such that there is a strategy decomposition~$\omega'_{\mathrm{Dis}}$
  of~$W'_{\mathrm{Dis}}$ for Dis, and there is a strategy
  decomposition~$\omega'_{\mathrm{Con}}$ of~$W'_{\mathrm{Con}}$ for Con.
  If $W'_{\mathrm{Con}} = \emptyset$ then $\omega'_{\mathrm{Dis}}$ is a $b'$-decomposition of~$R$
  for Dis, where $b' < b$, and hence
  $\big((R, \omega'_{\mathrm{Dis}}), (T, \tau), B\big)$ is a $b$-decomposition
  of~$V$ for Dis. So $W_{\mathrm{Dis}} = V$ and $W_{\mathrm{Con}} = \emptyset$ 
  fulfils the conditions of the lemma.

  If $W'_{\mathrm{Con}} \not= \emptyset$, then let $T'$ be the set of vertices
  (not including vertices in~$W'_{\mathrm{Con}}$) from which Con has a strategy to
  reach a vertex in~$W'_{\mathrm{Con}}$, and let $\tau'$ be a corresponding
  positional reachability strategy.
  Let $U = V \setminus (W'_{\mathrm{Con}} \cup T')$. 
  By the inductive hypothesis, there is a partition $W''_{\mathrm{Dis}}$ and
  $W''_{\mathrm{Con}}$ of~$U$, such that there is a strategy
  decomposition~$\omega''_{\mathrm{Dis}}$ of~$W''_{\mathrm{Dis}}$ for Dis, and a strategy
  decomposition~$\omega''_{\mathrm{Con}}$ of~$W''_{\mathrm{Con}}$ for Con. Moreover, $W''_{\mathrm{Dis}}$ and
  $W''_{\mathrm{Con}}$ are traps for respectively Con and Dis in~$U$.

  We claim that $W''_{\mathrm{Dis}}$ and $W''_{\mathrm{Con}} \cup T' \cup W'_{\mathrm{Con}}$ is a partition
  of~$V$, traps for respectively Con and Dis, such that there is a strategy decomposition of the former
  for Dis, and there is a strategy decomposition of the latter for
  Con.
  The former is straightforward: $W''_{\mathrm{Dis}}$ is a trap for Con in $U$ which is itself a trap for
  Con in $V$ by construction, so $W''_{\mathrm{Dis}}$ is a trap for Con in $V$. 
  Moreover, $\omega''_{\mathrm{Dis}}$ is a
  strategy decomposition of~$W''_D$ for Dis.
  For the latter, $W''_{\mathrm{Con}} \cup T' \cup W'_{\mathrm{Con}}$ is a trap for Dis by construction and 
  we claim that
  $\omega$ is a strategy decomposition
  of~$W''_{\mathrm{Con}} \cup T' \cup W'_{\mathrm{Con}}$ for Con, where 
  $\omega = \big((W''_{\mathrm{Con}}, \omega''_{\mathrm{Con}}), (T', \tau'), 
    (W'_{\mathrm{Con}}, \omega'_{\mathrm{Con}})\big)$. 
  Indeed, $W'_{\mathrm{Con}}$ is non-empty, is a trap for Dis by induction hypothesis and does not contain any vertices of priority $b$ by construction. Thus, $\omega'_{\mathrm{Con}}$ is a $b'$-decomposition of
$W'_{\mathrm{Con}}$ for Con where $b'<b$. Similarly, by induction hypothesis, $\omega''_{\mathrm{Con}}$ is a $b''$-decomposition of
$W'_{\mathrm{Con}}$ for Con where $b''\leq b$. 
  
   \medskip
  
  The second case is when $b$ is odd. 
  Let $R$ be the set of vertices winning for Dis for the mean-payoff game. 
  
  First, suppose that $R$ is non empty, and let 
  $U = V \setminus R$. By the inductive hypothesis, there is a partition $W'_{\mathrm{Dis}}$ and $W'_{\mathrm{Con}}$
  of~$U$, such that there is a strategy decomposition~$\omega'_{\mathrm{Dis}}$
  of~$W'_{\mathrm{Dis}}$ for Dis, and there is a strategy
  decomposition~$\omega'_{\mathrm{Con}}$ of~$W'_{\mathrm{Con}}$ for Con. Moreover, $W'_{\mathrm{Dis}}$ and $W'_{\mathrm{Con}}$
  are traps in $U$ for respectively Con and Dis.
  We claim that $W'_{\mathrm{Con}}$ and $W'_{\mathrm{Dis}} \cup R$ is a partition
  of~$V$, traps for respectively Dis and Con, such that there is a strategy decomposition of the former
  for Con, and there is a strategy decomposition of the latter for
  Dis.
  The former is straightforward: $W'_{\mathrm{Con}}$ is a trap for Dis in $U$ which is itself a trap for
  Dis in $V$ by construction (because $R$ is a winning set for Dis), so $W'_{\mathrm{Con}}$ is a trap for Dis in $V$. 
  Moreover, $\omega'_{\mathrm{Con}}$ is a
  strategy decomposition of~$W'_{\mathrm{Con}}$ for Con.
  For the latter, $W'_{\mathrm{Dis}} \cup R$ is a trap for Con by construction and 
  we claim that
  $\omega$ is a strategy decomposition
  of~$W'_{\mathrm{Dis}} \cup R$ for Dis, where 
  $\omega = \big((W'_{\mathrm{Dis}}, \omega'_{\mathrm{Dis}}), (\emptyset, \emptyset), 
    (R, \omega')\big)$, with $\omega'$ to be a positional strategy for Dis 
    that is mean-payoff winning for
      her on~$R$.
  Indeed, $R$ is non-empty, is a trap for Con by definition and $\omega'$ is a mean-payoff winning positional strategy for Dis on it. Moreover, by induction hypothesis, $\omega'_{\mathrm{Dis}}$ is a $b'$-decomposition of
$W'_{\mathrm{Dis}}$ for Con where $b'\leq b$. 
  
  Suppose now that $R$ is empty, that is to say that there exist $\lambda$,
  a positional strategy for Con 
    that is mean-payoff winning for him on $V$.
  Let $T$ be the set of vertices (not including vertices in~$B$) from
  which Con has a strategy to reach a vertex in~$B$, and let $\tau$ be
  a corresponding positional reachability strategy. 

  Let $R' = V \setminus (B \cup T)$. 
  By the inductive hypothesis, there is a partition $W'_{\mathrm{Dis}}$ and $W'_{\mathrm{Con}}$
  of~$R$, such that there is a strategy decomposition~$\omega'_{\mathrm{Dis}}$
  of~$W'_{\mathrm{Dis}}$ for Dis, and there is a strategy
  decomposition~$\omega'_{\mathrm{Con}}$ of~$W'_{\mathrm{Con}}$ for Con.

  If $W'_{\mathrm{Dis}} = \emptyset$ then $\omega'_{\mathrm{Con}}$ is a $b'$-decomposition of~$R$
  for Con, where $b' < b$ and thus 
  $\omega = \big((R', \omega'_{\mathrm{Con}}), (T, \tau), B, \lambda \big)$ is a strategy decomposition
  of $V$ for Con.
  
  Otherwise (if $W'_{\mathrm{Dis}} \neq \emptyset$), then let $T'$ be the set of vertices 
  (not including vertices in~$W'_{\mathrm{Dis}}$)
  from which Dis has a strategy to reach a vertex in~$W'_{\mathrm{Dis}}$, and let
  $\tau'$ be a corresponding positional reachability strategy.
  Let $U' = V \setminus (W'_{\mathrm{Dis}} \cup T')$.
  By the inductive hypothesis, there is a partition~$W''_{\mathrm{Dis}}$
  and~$W''_{\mathrm{Con}}$ of~$U'$, such that there is a strategy
  decomposition~$\omega''_{\mathrm{Dis}}$ of~$W''_{\mathrm{Dis}}$ for Dis, and a strategy
  decomposition~$\omega''_{\mathrm{Con}}$ of~$W''_{\mathrm{Con}}$ for Con.
  
  We claim that $W''_{\mathrm{Con}}$ and $W''_{\mathrm{Dis}} \cup T' \cup W'_{\mathrm{Dis}}$ is a partition
  of~$V$, traps for respectively Dis and Con, such that there is a strategy 
  decomposition of the former
  for Con, and there is a strategy decomposition of the latter for
  Dis.
    The former is straightforward: $W''_{\mathrm{Con}}$ is a trap for Dis in $U'$ which is itself a trap for
  Dis in $V$ by construction, so $W''_{\mathrm{Con}}$ is a trap for Dis in $V$. 
  Moreover, $\omega''_{\mathrm{Con}}$ is a
  strategy decomposition of~$W''_{\mathrm{Con}}$ for Con.
  For the latter, $W''_{\mathrm{Dis}} \cup T' \cup W'_{\mathrm{Dis}}$ is a trap for Con by construction and 
  we claim that
  $\omega$ is a strategy decomposition
  of~$W''_{\mathrm{Dis}} \cup T' \cup W'_{\mathrm{Dis}}$ for Dis, where 
  $\omega = \big((W''_{\mathrm{Dis}}, \omega''_{\mathrm{Dis}}), (T', \tau), 
    (W'_{\mathrm{Dis}}, \omega'_{\mathrm{Dis}})\big)$.
  Indeed, $W'_{\mathrm{Dis}}$ is non-empty, is a trap for Con by induction hypothesis and does not contain any vertices of priority $b$ by construction. Thus, $\omega'_{\mathrm{Dis}}$ is a $b'$-decomposition of
$W'_{\mathrm{Dis}}$ for Dis where $b'<b$. Similarly, by induction hypothesis, $\omega''_{\mathrm{Dis}}$ is a $b''$-decomposition of
$W''_{\mathrm{Dis}}$ for Dis where $b''\leq b$.
\end{proof}

Observe that Lemmas~\ref{lemma:strategy-for-Dis}, 
\ref{lemma:strategy-for-Con}, and~\ref{lemma:existence-strategy-decomposition} 
form a self-contained argument to establish both determinacy of threshold
mean-payoff parity games (from every vertex, one of the players has a winning 
strategy), and membership of the problem of deciding the winner both in NP 
and in co-NP. 
For the latter, it suffices to note that strategy decompositions can be 
described in a polynomial number of bits, and it can be routinely checked in 
small polynomial time whether a proposed strategy decomposition for either of 
the players satisfies all the conditions in the corresponding definition. 
The NP and co-NP membership has been first established by Chatterjee and 
Doyen~\cite{CD12}; 
we merely give an alternative proof.

\begin{corollary}[Chatterjee and Doyen~\cite{CD12}]
  The problem of deciding the winner in mean-payoff parity games is both in NP
  and in co-NP. 
\end{corollary}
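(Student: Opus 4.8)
The plan is to exhibit strategy decompositions as polynomially-sized, polynomially-checkable certificates, and to combine Lemmas~\ref{lemma:strategy-for-Dis}, \ref{lemma:strategy-for-Con}, and~\ref{lemma:existence-strategy-decomposition}. Fix a vertex $v$; the decision problem asks whether Con wins from $v$, which by the determinacy established by the three lemmas is equivalent to asking whether Dis does not win from $v$. For NP membership I would have the verifier guess a strategy decomposition $\omega$ for Con of a vertex set $W$, and accept exactly when (i)~$v \in W$, (ii)~$W$ is a trap for Dis in~$V$, and (iii)~$\omega$ satisfies all the defining conditions of a $b$-decomposition of~$W$ for Con. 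Soundness is exactly Lemma~\ref{lemma:strategy-for-Con}: a valid decomposition of a trap~$W$ yields a winning strategy for Con from every vertex of~$W$, and in particular from~$v$. Completeness is Lemma~\ref{lemma:existence-strategy-decomposition}: if Con wins from~$v$ then $v \in W_{\mathrm{Con}}$, which is a trap for Dis and carries a strategy decomposition for Con, so a valid certificate exists.

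For co-NP membership I would run the symmetric argument on the complement. A NO instance is one in which Dis wins from~$v$, and the verifier guesses a strategy decomposition $\omega$ for Dis of a set~$W$, accepting when $v \in W$, $W$ is a trap for Con in~$V$, and $\omega$ is a valid $b$-decomposition of~$W$ for Dis. Soundness now follows from Lemma~\ref{lemma:strategy-for-Dis}, and completeness follows from Lemma~\ref{lemma:existence-strategy-decomposition} together with determinacy: if Con does not win from~$v$, then $v \in W_{\mathrm{Dis}}$, which is a trap for Con carrying a decomposition for Dis.

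Two verification obligations remain, and they constitute the only real work. First, the certificate must be of polynomial size. Since each recursive level of a decomposition splits its set into disjoint parts with the distinguished block ($B$ in the even case, $R$ in the odd case) nonempty, the underlying recursion tree ranges over a partition of at most~$n$ vertices and hence has $O(n)$ nodes, each storing a vertex partition together with positional strategies of polynomial size; so the whole decomposition is described in a polynomial number of bits. Second, every defining condition must be checkable in polynomial time. Membership, the partition property, the claim that $\tau$ is a positional reachability strategy, and the claim that the relevant set is a trap (for Con or for Dis, in the induced subgame) are all standard attractor and reachability computations, performed level by level over the $O(n)$ nodes of the decomposition.

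The only nonroutine condition is that a claimed positional strategy is mean-payoff winning---namely $\lambda$ in condition~\ref{item:energy-strategyCon} for Con, and the positional $\omega'$ in condition~\ref{item:win-by-energyDis} for Dis. Here I would observe that once one player's positional choices are fixed, the residual game is a one-player mean-payoff game, in which the condition reduces to the presence or absence of a reachable negative-mean cycle; this is decided by a single negative-cycle detection (e.g.\ via Bellman--Ford), which runs in polynomial time. I expect this mean-payoff verification, together with confirming the polynomial bound on the size of the decomposition, to be the main obstacle, although both are entirely routine; with them in hand, the guess-and-verify procedures above place the problem simultaneously in NP and in co-NP.
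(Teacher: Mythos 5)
Your proposal is correct and follows essentially the same route as the paper: the paper's proof likewise observes that strategy decompositions are polynomial-size certificates, checkable in polynomial time, whose soundness and completeness follow from Lemmas~\ref{lemma:strategy-for-Dis}, \ref{lemma:strategy-for-Con}, and~\ref{lemma:existence-strategy-decomposition}; you merely spell out the details (certificate size, attractor/trap checks, cycle-based verification of the positional mean-payoff strategies) that the paper dismisses as routine. One pedantic note: for Con's strategy $\lambda$ the check is the \emph{absence of a nonnegative-total-weight cycle} in the residual graph (e.g.\ via a maximum-cycle-mean computation), not literally negative-cycle detection, but this is an equally routine polynomial-time test and does not affect correctness.
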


\section{Mean-payoff parity progress measures}
\label{section:progress-measures}

In this section we introduce the other key 
concept---\emph{mean-payoff parity progress measures}---that plays 
the critical role in achieving our main technical result---the 
first pseudo-quasi-polynomial algorithm for solving mean-payoff 
parity games.
In Lemmas~\ref{lemma:measure-to-decomposition} 
and~\ref{lemma:decomposition-to-measure} we establish that 
mean-payoff parity progress measures witness existence of winning
strategies for Dis, by providing explicit translations between them 
and strategy decompositions for Dis.

We stress that the purpose of introducing yet another concept of
witnesses for winning strategies for Dis is to shift technical 
focus from highlighting the recursive strategic structure of 
winning sets in strategy decompositions, to an order theoretic
formalization that makes the recursive structure be reflected
in the concept of ordered trees.
The order-theoretic formalization then allows us---in 
Section~\ref{section:lifting}---to apply the combinatorial result
on succinct coding of ordered trees by Jurdzi\'nski and 
Lazi\'c~\cite{JL17}, paving the way to the pseudo-quasi-polynomial 
algorithm.

\subsection{The definition}

A progress measure maps every vertex with an element of a linearly ordered set. 
Edges along which those elements decrease (for another specific defined order) are
called progressive, and an infinite path consisting only of progressive edges is winning 
for Dis. Then, we can derive a winning strategy for Dis if she can always follow a progressive edge and if Con has no other choice than following a progressive edge. 

Recall the assumption that~$d$---the upper bound on the vertex
priorities---is even.  

A \emph{progress measurement} is a sequence
$\big(\seq{m_{d-1}, m_{d-3}, \dots, m_\ell}, e\big)$,
where:
\begin{itemize}
\item
  $\ell$ is odd and $1 \leq \ell \leq d+1$
  (note that if $\ell = d+1$ then
  $\seq{m_{d-1}, m_{d-3}, \dots, m_\ell}$ is the empty
  sequence~$\seq{}$); 
\item
  $m_i$ is an element of a linearly ordered set
  (for simplicity, we write~$\leq$ for the order relation), for each
  odd~$i$, such that $\ell \leq i \leq d-1$;
\item
  $e$ is an integer such that $0 \leq e \leq nC$, or $e = \infty$. 
\end{itemize}
A \emph{progress labelling} $(\mu, \varphi)$ maps vertices to
progress measurements in such a way that if vertex~$v$ is mapped to 
$$\big(\mu(v), \varphi(v)\big) =
\big(\seq{m_{d-1}, m_{d-3}, \dots, m_\ell}, e\big)$$
then
\begin{itemize}
\item
  $\ell \geq \pi(v)$; and
\item
  if $e = \infty$ then $\ell$ is the smallest odd integer such that
  $\ell \geq \pi(v)$. 
\end{itemize}

For every priority $p$, $1 \leq p \leq d$, we obtain
a \emph{$p$-truncation} $\seq{m_{d-1}, m_{d-3}, \dots, m_\ell}|_p$ of $\seq{m_{d-1}, m_{d-3}, \dots, m_\ell}$, by removing the
components corresponding to all odd priorities smaller than~$p$. 
For example, if we fix $d=8$ then we have
$\seq{a, b, c}|_8 = \seq{}$, $\seq{a, b, c}|_6 = \seq{a}$, and
$\seq{a, b, c}|_3 = \seq{a, b, c}|_2 = \seq{a, b, c}$. 
We compare sequences using the lexicographic order; for simplicity,
and overloading notation, we write~$\leq$ to denote it.
For example, $\seq{a} < \seq{a, b}$, and $\seq{a, b, c} < \seq{a, d}$
if $b < d$.  

Let $(\mu, \varphi)$ be a progress labelling.
Observe that---by de\-fi\-ni\-tion---$\mu(v)|_{\pi(v)} = \mu(v)$, for every
vertex~$v \in V$.
We say that an edge $(v, u) \in E$ is \emph{progressive} 
in~$(\mu, \varphi)$ if: 
\begin{enumerate}
\item
  $\mu(v) > \mu(u)|_{\pi(v)}$; or
\item
  $\mu(v) = \mu(u)|_{\pi(v)}$, $\pi(v)$ is even, and
  $\varphi(v) = \infty$; or
\item
  $\mu(v) = \mu(u)$, $\varphi(v) \not= \infty$, and 
  $\varphi(v) + \weight(v, u) \geq \varphi(u)$.
\end{enumerate}

We can represent tuples as nodes in a tree where the components of the tuple represent
the branching directions in the tree to go from the root to the node. For example, a tuple $\seq{a, b, c}$ corresponds to the node reached from the root by first reaching the $a$th child of the root, then the $b$th child of this latter and finally the $c$th child of this one.
This way, the notion of progressive edges can be seen on a tree as in Figure~\ref{figure:progressive}.

\begin{figure}
\begin{multicols}{2}
\begin{tikzpicture}[sibling distance=2.5em, level distance=3.5em]
\tikzstyle{one}=[scale=0.5, shape=circle, fill=black]
\node[one] (a) {}
    child { node[one] {} 
    	child { node[one] {} 
    		child { node {$\vdots$} }
    		child { node {} edge from parent[draw=none] } }
    	child { node[one] {} } }	
	child { node[one] {} }
    child { node[one] (b) {} 
    	child { node[one] {} }
    	child { node[one] (c) {} 
    		child { node[one] (h) {}
    			 child [grow=left, level distance=1.5em] {node {$\mu(v)$} edge from parent[draw=none] } 
    			 child { node[one] (i) {} 
    			 	 child { node (l) {$\vdots$} }
    			     child { node (m) {$\vdots$} } }
    			 child { node[one] (j) {} 
    			 	 child { node {} edge from parent[draw=none] } 
    			     child { node (n) {$\vdots$} } }    			  
    			 child { node[one] (k) {} 
    			 	 child { node {} edge from parent[draw=none] }
    			     child { node (o) {$\vdots$} } } } 
    		child { node[one] (d) {} 
    			child { node {} edge from parent[draw=none] } 
    			child { node {} edge from parent[draw=none] } 
    			child { node {} edge from parent[draw=none] }
    			child { node[one] (e) {} 
    				child { node {} edge from parent[draw=none] }
    				child { node {} edge from parent[draw=none] }
    				child { node (p) {$\vdots$} } } } }
    	child { node[one] (g) {}  
    		child { node {} edge from parent[draw=none] }
    		child { node {} edge from parent[draw=none] } 
    		child { node (q) {$\vdots$} } } }
    child { node[one] (f) {} 
        child { node {} edge from parent[draw=none] }
        child { node {} edge from parent[draw=none] } 
        child { node (r) {$\vdots$} } };
        
   \begin{pgfonlayer}{background}
        \node [fill=blue!20, fit=(a), rounded corners=4mm, rotate=0, minimum width=2em, minimum height=2em] {};
        \node [fill=blue!20, fit=(b), rounded corners=4mm, rotate=0, minimum width=2em, minimum height=2em] {};
        \node [fill=blue!20, fit=(c), rounded corners=4mm, rotate=0, minimum width=2em, minimum height=2em] {};
        \node [fill=blue!20, fit=(d), rounded corners=4mm, rotate=0, minimum width=2em, minimum height=2em] {};
        \node [fill=blue!20, fit=(e), rounded corners=4mm, rotate=0, minimum width=2em, minimum height=2em] {};
        \node [fill=blue!20, fit=(f), rounded corners=4mm, rotate=0, minimum width=2em, minimum height=2em] {};
        \node [fill=blue!20, fit=(g), rounded corners=4mm, rotate=0, minimum width=2em, minimum height=2em] {};
        \node [fill=blue!20, fit=(p), rounded corners=4mm, rotate=0, minimum width=2em, minimum height=2em] {};
        \node [fill=blue!20, fit=(q), rounded corners=4mm, rotate=0, minimum width=2em, minimum height=2em] {};
        \node [fill=blue!20, fit=(r), rounded corners=4mm, rotate=0, minimum width=2em, minimum height=2em] {};

        \node [fill=orange!20, fit=(h), rounded corners=4mm, rotate=0, minimum width=2em, minimum height=2em] {};
         \node [fill=green!20, fit=(h), rounded corners=4mm, rotate=0, minimum width=1.5em, minimum height=1.5em] {};
        \node [fill=orange!20, fit=(i), rounded corners=4mm, rotate=0, minimum width=2em, minimum height=2em] {};
        \node [fill=orange!20, fit=(j), rounded corners=4mm, rotate=0, minimum width=2em, minimum height=2em] {};
        \node [fill=orange!20, fit=(k), rounded corners=4mm, rotate=0, minimum width=2em, minimum height=2em] {};
        \node [fill=orange!20, fit=(l), rounded corners=4mm, rotate=0, minimum width=2em, minimum height=2em] {};
        \node [fill=orange!20, fit=(m), rounded corners=4mm, rotate=0, minimum width=2em, minimum height=2em] {};
        \node [fill=orange!20, fit=(n), rounded corners=4mm, rotate=0, minimum width=2em, minimum height=2em] {};
        \node [fill=orange!20, fit=(o), rounded corners=4mm, rotate=0, minimum width=2em, minimum height=2em] {};
    \end{pgfonlayer}
\end{tikzpicture}

\columnbreak

The siblings are ordered according to the linear order $\leq$. The smallest child is on the right and the greatest on the left in the picture.
An edge $(v,u)$ is progressive if one of the three following conditions holds:

\bigskip

\textcolor{blue!50}{- condition 1 -} \\
$\mu(u)$ is one of the blue nodes, \ie above or on the right of $\mu(v)$.

\bigskip

\textcolor{orange!50}{- condition 2 -} \\
$\priority(v)$ is even, $\varphi(v) = \infty$ and $\mu(u)$ is one of the orange nodes, \ie belongs to the subtree rooted in $\mu(v)$.

\bigskip

\textcolor{green!50}{- condition 3 -} \\
$\mu(u) = \mu(v)$, $\varphi(u) \in \Z$ and
$\varphi(v) + \weight(v,u) \geq \varphi(u)$.

\end{multicols}
\caption{\label{figure:progressive}Conditions for an edge to be progressive.}
\end{figure}

A progress labelling $(\mu, \varphi)$ is a \emph{progress measure}
if:
\begin{itemize}
\item
  for every vertex owned by Dis, there is at least one outgoing edge
  that is progressive in~$(\mu, \varphi)$; and
\item
  for every vertex owned by Con, all outgoing edges are progressive
  in~$(\mu, \varphi)$. 
\end{itemize}

In the next two sections, we prove that there is a strategy decomposition 
for Dis if and only is there is a progress measure.

\subsection{From progress measures to strategy decompositions}

\begin{lemma}
\label{lemma:measure-to-decomposition}
  If there is a progress measure then there is a strategy
  decomposition of~$V$ for Dis.  
\end{lemma}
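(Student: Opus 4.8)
The plan is to turn the measure into a decomposition by a constructive induction, and for that I would first strengthen the statement to range over subgames: for every subgame $W$ such that the restriction of $(\mu, \varphi)$ to the subgraph induced by $W$ is again a progress measure, $W$ admits a strategy decomposition for Dis. The lemma is then the instance $W = V$, since $(\mu, \varphi)$ is by hypothesis a progress measure on all of $V$. I would run the induction on $|W|$ together with the largest vertex priority $b$ occurring in $W$, splitting into the two cases (even / odd $b$) that mirror the definition of a $b$-decomposition for Dis. The constant thread is that each subgame I recurse into is chosen so that the restriction of $(\mu,\varphi)$ to it is still a progress measure, which is exactly what lets the induction fire.

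The even case is a plain McNaughton--Zielonka peeling step. Let $B \neq \emptyset$ be the set of priority-$b$ vertices, let $T$ be the Dis-attractor to $B$ in $W$ minus $B$, with positional reachability strategy $\tau$, and let $R = W \setminus (B \cup T)$. The only real point to check is that $(\mu,\varphi)$ restricts to a progress measure on $R$: every Dis vertex of $R$ has all of its successors in $R$ (otherwise it would already lie in the attractor), so its progressive outgoing edge stays inside $R$; and every Con vertex of $R$ has all of its edges progressive in $(\mu,\varphi)$, in particular those that remain in $R$. Since $B \neq \emptyset$ we have $|R| < |W|$, so the induction supplies a $b'$-decomposition $\omega'$ of $R$ with $b' < b$, and $\big((R, \omega'), (T, \tau), B\big)$ is the required $b$-decomposition.

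The odd case is the crux, and it is where the quantitative component $\varphi$ must be used. Now $b$ is an odd (hence, for Dis, unfavourable) priority, and the definition forces us to exhibit a non-empty recurrent region $R$ that is a trap for Con on which Dis wins, together with its Dis-attractor $T$ and a remainder $U = W \setminus (R \cup T)$ carrying a $b''$-decomposition with $b'' \leq b$. To locate $R$ I would exploit the order structure of $\mu$: consider the vertices carrying the lexicographically smallest measure value and close that slice into a trap for Con. On the surviving region the measure cannot strictly decrease (condition~1 is impossible below the minimum) and condition~2 would push the play to a strictly deeper node and thus out of the slice; hence the surviving vertices all carry finite $\varphi$ and their progressive edges are exactly the energy-progressive ones (condition~3). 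That restriction of $\varphi$ is precisely an energy progress measure in the sense of Brim et al.~\cite{BCDGR11}, and it certifies that every cycle inside $R$ has non-negative total cost, which gives a positional strategy for Dis that is mean-payoff winning on $R$, matching case~\ref{item:win-by-energyDis}; when the minimal region still carries genuine parity structure its smaller top priority lets the induction instead return a $b'$-decomposition with $b' < b$. Setting $T$ to the Dis-attractor to $R$ and verifying, as in the even case, that $(\mu,\varphi)$ restricts to a progress measure on $U$ then yields $\omega''$ by induction.

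The main obstacle is this odd case, and specifically getting the definition of $R$ right. The difficulty is that the $\pi(v)$-truncations in condition~1 mean a minimal measure value does not by itself prevent Con from escaping, since a truncated successor value can drop below the minimum; so $R$ must be taken as the \emph{greatest} trap for Con contained in the minimal-measure slice rather than the slice itself, and I would have to argue that this greatest trap is non-empty and that the restriction of $(\mu,\varphi)$ remains a progress measure on both $R$ and $U$. The energy-games argument that converts condition-3 progressiveness into a mean-payoff-winning positional strategy for Dis is the heart of the quantitative part, and the clean separation it achieves — with the parity of $b$ alone selecting between the two cases while $\mu$ and $\varphi$ handle the parity and mean-payoff obligations respectively — is exactly what makes the induction close.
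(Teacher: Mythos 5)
Your even case is sound and matches the paper's, and your induction set-up (on subgames to which the measure restricts, ordered by size and top priority) is fine. The odd case, however, contains a genuine error, and it is precisely at the step you yourself flag as the crux. Your argument rests on the claim that a condition-2 edge ``would push the play to a strictly deeper node and thus out of the slice''. That is false: condition~2 requires $\mu(v) = \mu(u)|_{\pi(v)}$, which is satisfied in particular when $\mu(u) = \mu(v)$, so the target may stay in the minimal slice. Concretely, a Dis-owned vertex $v$ of even priority with a self-loop of cost $-1$, with $\mu(v)$ minimal and $\varphi(v) = \infty$, is perfectly consistent with the definition of a progress measure (the self-loop is progressive by condition~2), and it can occur in a game whose top priority $b$ is odd. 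Such a vertex survives into your trap $R$, so ``the surviving vertices all carry finite $\varphi$'' does not follow, the restriction of $\varphi$ to $R$ is not an energy progress measure, and the conclusion that every cycle in $R$ has non-negative cost is simply wrong: on this region Dis wins by \emph{parity} (the even priority $b' < b$ recurs), not by mean payoff, so producing case~\ref{item:win-by-energyDis} of the definition there is incorrect.

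The repair requires a different dichotomy, which is what the paper does: split on whether the minimal slice $R$ contains a vertex with finite $\varphi$. If the finite-$\varphi$ part $R'$ is non-empty, then $R'$ \emph{by itself} is already a trap for Con --- no closure operation is needed, because from a finite-$\varphi$ minimal vertex only condition~3 can make an edge progressive, and condition~3 forces the target back into $R'$ --- and the Brim et al.\ argument applies to $R'$, giving case~\ref{item:win-by-energyDis}. If instead every slice vertex has $\varphi = \infty$, then all of them have the same even priority $b' < b$, and the correct recurrent region is \emph{not} any trap inside the slice but the whole subtree $R''$ rooted at the minimal value, i.e.\ all $u$ with $\mu(u)|_{b'}$ equal to the minimal tuple; $R''$ is a trap for Con, the measure restricts to it, its top priority is below $b$, and the induction yields case~\ref{item:win-by-inductionDis}. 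Your ``greatest trap for Con contained in the slice'' cannot substitute for this: it can be empty even when Dis wins everywhere (take a single Con-owned slice vertex of even priority and infinite $\varphi$ whose only edge, progressive by condition~2, goes strictly deeper in the tree), and even when non-empty it is defined purely graph-theoretically, so a Dis vertex in it need not retain any \emph{progressive} edge inside it --- hence the restricted labelling need not be a progress measure on it, and your induction cannot fire.
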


In the proof we will use the following simple fact 
(see, for example, Brim et al.~\cite{BCDGR11}):
if all the edges in an infinite path are progressive and 
fulfill condition~$3$.\ of the definition, then the mean payoff 
of this path is non-negative (and thus winning for Dis).

\begin{proof}
  We proceed by induction on the number of distinct vertex priorities
  in the game graph. 
  Let $b\leq d$ be the highest priority appearing in the game.    

  The base case is when~$b$ is the only vertex priority.  
  If $b$ is even, then by setting $B = V$ and $T = R = \emptyset$ we
  obtain a strategy decomposition of~$V$ for Dis.
  If $b$ is odd, then an edge can only be progressive if it satisfies
  condition~3.\ of the definition of a progressive edge;
  hence the progress measure yields a positional strategy~$\omega'$
  for Dis that is mean-payoff winning for her on~$V$. 
  It follows that setting $R = V$, $T = U = \emptyset$, and $\omega'$
  as above, we obtain a strategy decomposition of~$V$ for Dis.
  
Consider the inductive step now. First, suppose that $b$ is even. 
Let $B$ be the set of the vertices of priority $b$. 
Let $T$ be the set of vertices from which Dis has a reachability strategy 
to $B$, $\tau$ be this positional strategy and let 
$R =  V \setminus (B \cup T)$. 
Because, by construction, there is no edge from a vertex in $R$ owned 
by Dis to a vertex in $B \cup T$, the progress measure on $V$ gives 
also a progress measure on $R$ when restricted to its vertices. 
Let $\omega$ be a $b'$-decomposition of~$R$ for Dis that exists by the 
inductive hypothesis.
Note that $b' < b$ because the biggest priority in~$R$ is smaller than~$b$.
It follows that $\big((R, \omega), (T, \tau), B\big)$ is a strategy 
decomposition for Dis in~$V$.
  
  Suppose now that $b$ is odd. 
  Let $R$ be the set of vertices labelled by the smallest tuple: 
  $R =  \{ v \in V : \mu(v) \leq \mu(u) \text{ for all } u \in V \}$.
  (If we pictured the tuples on a tree as in 
  Figure~\ref{figure:progressive}, those would be the vertices 
  that are mapped to the rightmost-top node in the tree among the 
  nodes at least one vertex is mapped to.)
  Let $R'$ be the subset of $R$ of those vertices having a 
  finite~$\varphi$: $\{ v \in R : \varphi(v) \neq \infty\}$.
  
  Suppose first that $R'\neq \emptyset$. An edge going out from a 
  vertex in~$R'$ can only be progressive if it fulfills condition~$3$.\ 
  in the definition. 
  It then has to go to a vertex of $R'$ too.
  Thus, $R'$ is a trap for Con, and Dis has a winning strategy $\omega'$ 
  in $R'$ for the mean-payoff game. 
  
  Let $T$ be the set of vertices from which Dis has a strategy to 
  reach $R'$ and let $\tau $ this positional reachability strategy. 
  Let $U = V \setminus (R' \cup T)$. 
  Because, by construction, there is no edge from a vertex in $U$ 
  owned by Dis to a vertex in $R' \cup T$, then the progress measure 
  on~$V$ gives also a progress measure on $U$ when restricted to its 
  vertices. 
  We can then apply the inductive hypothesis and get $\omega$ a 
  strategy decomposition of~$U$ for Dis. 
  Note that $\big((U, \omega), (T, \tau), (R',\omega')\big)$ is  
  a strategy decomposition of~$V$ for Dis.
  
  Suppose now that $R' = \emptyset$. The non-empty set $R$ contains 
  only vertices $v$ such that $\varphi(v) = \infty$. 
  Then, by definition and because all those vertices are associated 
  with the same tuple, they must all have priority $b'$ or $b'+1$ for 
  some even number~$b'$. 
  
  Any edge going out from a vertex of $R$ is progressive if and only 
  if it fulfills condition~$2$.\ of the definition. 
  Thus, the priority of all the vertices in $R$ has to be even and 
  is consequently $b'$ with $b' < b$. 
  
  Let $R'' =  \{ u \in V : \mu(v) = \mu(u)|_{\pi(v)} \text{ for } v \in R \}$.
  (If we picture the tuples on a tree as in Figure~\ref{figure:progressive}, 
  those are the vertices that are mapped to the nodes in the subtree 
  rooted in the node corresponding to~$R$.)
  By definition, the priority of all those vertices is also smaller 
  than~$b$. 
  Moreover, an edge going out from a vertex in $R''$ can only be 
  progressive if it goes to a vertex in $R''$ too. 
  So, $R''$ is a trap for Con and an edge from a vertex in $R''$ owned 
  by Dis to a vertex not in~$R''$ cannot be progressive. 
  So the progress measure on $V$ gives also a progress measure on $R''$ 
  when restricted to its vertices. 
  By the inductive hypothesis, there is a strategy decomposition 
  $\omega''$ of $R''$ for Dis.
  Let $T$ be the set of vertices from which Dis has a strategy to reach 
  $R''$ and let $\tau$ be a corresponding positional reachability strategy. 
  Let $U = V \setminus (R'' \cup T)$. 
  Because, by construction, there is no edge in~$U$ from a vertex
  owned by Dis to a vertex in $R'' \cup T$, the progress measure on $V$ 
  gives also a progress measure on~$U$ when restricted to its vertices. 
  By the inductive hypothesis, there is a strategy decomposition~$\omega$
  of~$U$ for Dis.
  Note that $\big((U, \omega), (T, \tau), (R'',\omega'')\big)$ is 
  a strategy decomposition of~$V$ for Dis.
\end{proof}

\subsection{From strategy decompositions to progress measures} 

\begin{lemma}
\label{lemma:decomposition-to-measure}
  If there is a strategy decomposition of~$V$ for Dis then there is a
  progress measure. 
\end{lemma}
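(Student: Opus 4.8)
The plan is to proceed by induction on the number of distinct vertex priorities in the game graph, mirroring the inductive structure of the strategy decomposition itself. Given a strategy decomposition $\omega$ of $V$ for Dis, I would define the progress labelling $(\mu, \varphi)$ by reading off the recursive structure of $\omega$: the sequence $\mu(v)$ should record, for each odd priority from $d-1$ down to the priority of $v$, the ``position'' of $v$ within the nested transient/recurrent sets at that level, and $\varphi(v)$ should record an energy value for the vertices that are won by the mean-payoff strategy at the bottom of the recursion. The key is that the ordered-tree structure of $\mu$ exactly captures how each recursive call in the decomposition peels off a highest-priority block $B$ (even case) or a recurrent trap $R$ (odd case), and the lexicographic order on tuples reflects the nesting order of these sets.

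The core of the argument is to verify the three progressiveness conditions for the edges that the strategy $\sigma(\omega)$ allows. I would treat the two cases of the decomposition separately. In the even case, where $\omega = \big((R, \omega'), (T, \tau), B\big)$, the block $B$ of top priority $b$ sits at the ``smallest'' (rightmost-top) tuple position; edges entering $B$ satisfy condition~$1$ (strict lexicographic decrease at priority $b$), edges within $B$ satisfy condition~$2$ (since $b$ is even and we set $\varphi = \infty$ there), and edges inside $R$ inherit progressiveness from the inductively constructed measure on $R$, suitably lifted by prepending a fixed component. In the odd case, where $\omega = \big((U, \omega''), (T, \tau), (R, \omega')\big)$, I would place $R$ at the smallest tuple; if $\omega'$ is an inductive $b'$-decomposition I lift its measure, and if $\omega'$ is a mean-payoff-winning positional strategy I use the energy-progress-measure machinery of Brim et al.~\cite{BCDGR11} to construct finite $\varphi$ values on $R$ satisfying condition~$3$ along the winning edges.

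The main technical obstacle, as I see it, is twofold. First, I must ensure that the linearly ordered sets indexing the tuple components are compatible across recursive calls and that lifting a sub-measure (prepending or adjusting components) preserves all previously established progressiveness conditions without clobbering the truncation identities $\mu(v)|_{\pi(v)} = \mu(v)$ and the constraint that $\varphi(v) = \infty$ forces $\ell$ to be minimal. Second, and more delicate, is the odd mean-payoff case: I must invoke the existence of an energy progress measure for Dis on the trap $R$ (bounded by $nC$, matching the range $0 \le e \le nC$ in the definition), and glue it consistently with the ordered-tree component so that the reachability strategy $\tau$ into $R$ also yields progressive edges. The reachability edges require care because following $\tau$ must produce a lexicographic decrease (condition~$1$) at the appropriate priority level, which I would arrange by assigning transient vertices in $T$ tuple values strictly larger than those of the target set, exploiting that $\tau$ strictly decreases a reachability rank.

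Once progressiveness is established for all relevant edges, the two progress-measure requirements follow: every Dis-owned vertex has an outgoing progressive edge (namely the one prescribed by $\sigma(\omega)$), and every Con-owned vertex has all outgoing edges progressive, because in a trap for Con every Con-edge stays within the winning region and the decomposition guarantees no Con-edge escapes to a ``better'' position. I expect the bookkeeping of tuple indices and the careful separation of the three cases ($R' \neq \emptyset$, $R' = \emptyset$, even top priority) to be the bulk of the work, but no single step should require machinery beyond the inductive hypothesis and the standard energy progress measure construction.
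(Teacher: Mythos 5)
Your proposal matches the paper's proof in all essentials: the same inductive construction that places $B$ (even case) or $R$ (odd case) at the smallest tuple positions, layers $T$ by reachability rank (the paper's $\tau$-distance sets $T_i$ with components $t_1 < t_2 < \dots < t_k$ wedged below the surrounding values), carries the inductively obtained measure on the recurrent set unchanged, assigns $\varphi = \infty$ with condition~2 on the even-top block, and invokes Brim et al.'s energy progress measures for the mean-payoff-winning case of the odd decomposition. The one slip is your stated induction measure: the number of distinct priorities need not decrease when recursing into $U$ in the odd case (the decomposition only guarantees $b'' \leq b$ there), so the induction must be on the size of the game graph (as the paper does it) or structurally on the decomposition itself---which is in fact the ``mirroring'' you describe, so this is a cosmetic rather than substantive defect.
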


\begin{proof}
  The proof is by induction on the size of the game graph. 
  Let $b$ be the biggest vertex priority in~$V$. 
  We strengthen the inductive hypothesis by requiring that the
  progress measure $(\mu, \varphi)$ whose existence is claimed in
  the lemma is such that
  all sequences in the image of~$\mu$ have the same prefix corresponding 
  to indices~$k$, such that $k > b$. 
  We need to consider two cases based on the parity of~$b$. 

Suppose first that $b$ is even.
  Let $\omega = \big((R, \omega'), (T, \tau), B\big)$ be a
  $b$-decomposition of~$V$ for Dis.
  Since $B \not= \emptyset$, by the inductive hypothesis there is a
  progress measure $(\mu', \varphi')$ on~$R$.
  For every vertex~$v \in T$, define its $\tau$-distance
  to~$B$ to be the largest number of edges on a path starting at~$v$,
  consistent with~$\tau$, and whose only vertex in~$B$ is the last
  one. 
  Let $k$ be the largest such $\tau$-distance, and we
  define $T_i$, $1 \leq i \leq k$, to be the set of vertices in~$T$
  whose $\tau$-distance to~$B$ is~$i$. 

  Let $\seq{m_{d-1}, m_{d-3}, \dots, m_{b+1}}$ be the common prefix of
  all sequences in the image of~$\mu'$. 
  Let $t_1, t_2, \dots, t_k$ be elements of the linearly ordered set
  used in progress measurements, such that for every $r$ that is 
  the component of a sequence in the image of~$\mu'$ corresponding to
  priority~$b-1$, we have 
  $r > t_k > \dots > t_2 > t_1$, and let $t$ be a chosen element of the linearly ordered
  set (it does not matter which one).
  Define the progress labelling $(\mu, \varphi)$ for all
  vertices $v \in V$ as follows:
  \[
  \big(\mu(v), \varphi(v)\big) =
  \begin{cases}
    \big(\mu'(v), \varphi'(v)\big) \hfill \text{if $v \in R$}, \\
    \big(\seq{m_{d-1}, \dots, m_{b+1}, t_i, m_{b-3}, \ldots,
    m_{\ell}}, \infty\big) \quad \; \\
      \hfill \text{if $v \in T_i, 1 \leq i \leq k$}, \\ 
    \big(\seq{m_{d-1}, m_{d-3}, \dots, m_{b+1}}, \infty\big) \hfill 
      \text{if $v \in B$}; 
  \end{cases}
  \]
where $\ell$ is the smallest odd number no smaller than $\pi(v)$ 
  and $m_{b-3} = \ldots = m_{\ell} = t$.

 The progress labelling $(\mu, \varphi)$ as defined above is a
  desired progress measure. 
  It is illustrated as a tree in Figure~\ref{figure:case1}.
  
  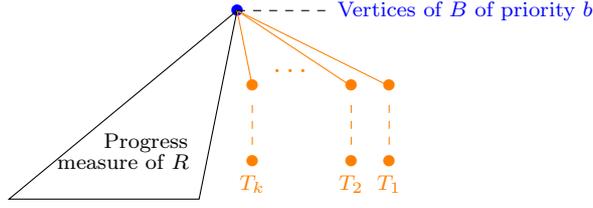
\begin{figure}
  \begin{center}
\begin{tikzpicture}
\node[] at (10,2.5) {\textcolor{blue}{$\bullet$}};
\draw[] (10,2.5) -- (7,0) -- (9.5,0) -- (10,2.5);
\node[] at (8.8,0.75) {\scriptsize Progress}; 
\node[] at (8.5,0.50) {\scriptsize measure of $R$};
\draw[orange] (10,2.5) -- (12,1.5);
\node[] (a) at (12,1.5) {\textcolor{orange}{$\bullet$}};
\draw[orange] (10,2.5) -- (11.5,1.5);
\node[] (b) at (11.5,1.5) {\textcolor{orange}{$\bullet$}};
\node[] at (10.7,1.7) {\textcolor{orange}{$\ldots$}};
\draw[orange] (10,2.5) -- (10.2,1.5);
\node[] (c) at (10.2,1.5) {\textcolor{orange}{$\bullet$}};

\node[] (d) at (13,2.5) {\scriptsize \textcolor{blue}{Vertices of $B$ of priority $b$}};
\draw[dashed] (10,2.5) -- (d);

\node[] (e) at (12,0.5) {\textcolor{orange}{$\bullet$}};
\node[] at (12,0.2) {\scriptsize \textcolor{orange}{$T_1$}};
\draw[dashed, orange] (e) -- (a);
\node[] (f) at (11.5,0.5) {\textcolor{orange}{$\bullet$}};
\node[] at (11.5,0.2) {\scriptsize \textcolor{orange}{$T_2$}};
\draw[dashed, orange] (f) -- (b);
\node[] (g) at (10.2,0.5) {\textcolor{orange}{$\bullet$}};
\node[] at (10.2,0.2) {\scriptsize \textcolor{orange}{$T_k$}};
\draw[dashed,orange] (g) -- (c);
\end{tikzpicture}
\caption{\label{figure:case1}Construction of a progress measure - $b$ even (the common prefix is not pictured).}
\end{center}
\end{figure}
  
  Suppose now that $b$ is odd.
  Let $\omega = \big((U, \omega''), (T, \tau), (R, \omega')\big)$ be a
  $b$-decomposition of~$V$ for Dis.
  Define $\tau$-distances, sets~$T_i$, and elements $t_i$ and $t$
  for $1 \leq i \leq k$, in the analogous way to the ``even~$b$''
  case, replacing set~$B$ by set~$R$.
  By the inductive hypothesis, there is a progress measure
  $(\mu'', \varphi'')$ on~$U$, and let
  $\seq{m_{d-1}, m_{d-3}, \dots, m_{b+2}}$ be the common prefix of
  all sequences in the image of~$\mu''$.
  We define a progress labelling $(\mu, \varphi)$ for all vertices
  in $U \cup T$ as follows:
  \[
  \big(\mu(v), \varphi(v)\big) =
  \begin{cases}
    \big(\mu''(v), \varphi''(v)\big) \hfill \text{if $v \in U$}, \\
    \big(\seq{m_{d-1}, \dots, m_{b+2}, t_i, m_{b-3}, \ldots,
      m_{\ell}}, \infty\big) \quad \; \\
    \hfill \text{if $v \in T_i$, $1 \leq i \leq k$};
  \end{cases}
  \]
  where $\ell$ is the smallest odd number no smaller than $\pi(v)$
  and $m_{b-3} = \ldots = m_{\ell} = t$.

  If $\omega'$ is a $b'$-decomposition of~$R$ for $b' < b$ (case~\ref{item:win-by-inductionDis}), then by
  the inductive hypothesis, there is a progress measure
  $(\mu', \varphi')$ on~$R$. 
  Without loss of generality, assume that all sequences in the images
  of~$\mu'$ and of~$\mu''$ have the common prefix
  $\seq{m_{d-1}, m_{d-3}, \dots, m_{b+2}}$, and that for all~$u$
  and~$r$ that are the components of a sequence in the images
  of~$\mu''$ and~$\mu'$, respectively, corresponding to priority~$b$,
  we have $u > t_k > t_{k-1} > \dots > t_1 > r$. 
  Define the progress labelling $(\mu, \varphi)$ for all vertices
  $v \in R$ in the following way:
  \[
  \big(\mu(v), \varphi(v)\big) = \big(\mu'(v), \varphi'(v)\big).
  \]
  This is illustrated in Figure~\ref{figure:case2}.
  
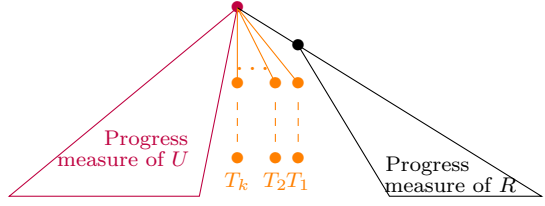
\begin{figure}
\begin{center}
\begin{tikzpicture}
\node[] at (10,2.5) {\textcolor{purple}{$\bullet$}};
\draw[purple] (10,2.5) -- (7,0) -- (9.5,0) -- (10,2.5);
\node[] at (8.8,0.75) {\scriptsize \textcolor{purple}{Progress}}; 
\node[] at (8.5,0.50) {\scriptsize \textcolor{purple}{measure of $U$}};
\draw[orange] (10,2.5) -- (10.8,1.5);
\node[] (a) at (10.8,1.5) {\textcolor{orange}{$\bullet$}};
\draw[orange] (10,2.5) -- (10.5,1.5);
\node[] (b) at (10.5,1.5) {\textcolor{orange}{$\bullet$}};
\node[] at (10.2,1.7) {\textcolor{orange}{$\ldots$}};
\draw[orange] (10,2.5) -- (10,1.5);
\node[] (c) at (10,1.5) {\textcolor{orange}{$\bullet$}};

\node[] (e) at (10.8,0.5) {\textcolor{orange}{$\bullet$}};
\node[] at (10.8,0.2) {\scriptsize \textcolor{orange}{$T_1$}};
\draw[dashed, orange] (e) -- (a);
\node[] (f) at (10.5,0.5) {\textcolor{orange}{$\bullet$}};
\node[] at (10.5,0.2) {\scriptsize \textcolor{orange}{$T_2$}};
\draw[dashed, orange] (f) -- (b);
\node[] (g) at (10,0.5) {\textcolor{orange}{$\bullet$}};
\node[] at (10,0.2) {\scriptsize \textcolor{orange}{$T_k$}};
\draw[dashed, orange] (g) -- (c);

\node[] at (10.8,2) {$\bullet$};
\draw[] (10,2.5) -- (10.8,2);
\draw[] (10.8,2) -- (12,0) -- (14,0) -- (10.8,2);
\node[] at (12.5,0.4) {\scriptsize Progress}; 
\node[] at (12.8,0.15) {\scriptsize measure of $R$};
\end{tikzpicture}
\caption{\label{figure:case2}Construction of a progress measure - $b$ odd - case~\ref{item:win-by-inductionDis}.}
\end{center}
\end{figure}

  If, instead, $\omega'$ is a positional strategy for Dis that
  is mean-payoff winning for him on~$R$ (case~\ref{item:win-by-energyDis}), 
  then by the result of Brim et al.~\cite{BCDGR11}, there is an energy
  progress measure $\widehat{\varphi}$ for Dis on~$R$.
  Let $r'$ be such that $r' < t_1$, and define the progress labelling 
  $(\mu, \varphi)$ for all vertices $v \in R$ in the following
  way: 
  \[
  \big(\mu(v), \varphi(v)\big) =
    \big(\seq{m_{d-1}, m_{d-3}, \dots, m_{b+2}, r'}, \widehat{\varphi}(v)\big). 
  \]
  This is illustrated in Figure~\ref{figure:case3}.
  
\begin{figure}
\begin{center}
\begin{tikzpicture}
\node[] at (10,2.5) {$\bullet$};
\draw[purple] (10,2.5) -- (7,0) -- (9.5,0) -- (10,2.5);
\node[] at (8.8,0.75) {\scriptsize \textcolor{purple}{Progress}}; 
\node[] at (8.5,0.50) {\scriptsize \textcolor{purple}{measure of $U$}};
\draw[orange] (10,2.5) -- (10.8,1.5);
\node[] (a) at (10.8,1.5) {\textcolor{orange}{$\bullet$}};
\draw[orange] (10,2.5) -- (10.5,1.5);
\node[] (b) at (10.5,1.5) {\textcolor{orange}{$\bullet$}};
\node[] at (10.2,1.7) {\textcolor{orange}{$\ldots$}};
\draw[orange] (10,2.5) -- (10,1.5);
\node[] (c) at (10,1.5) {\textcolor{orange}{$\bullet$}};

\node[] (e) at (10.8,0.5) {\textcolor{orange}{$\bullet$}};
\node[]  at (10.8,0.2) {\scriptsize \textcolor{orange}{$T_1$}};
\draw[dashed, orange] (e) -- (a);
\node[] (f) at (10.5,0.5) {\textcolor{orange}{$\bullet$}};
\node[] at (10.5,0.2) {\scriptsize \textcolor{orange}{$T_2$}};
\draw[dashed, orange] (f) -- (b);
\node[] (g) at (10,0.5) {\textcolor{orange}{$\bullet$}};
\node[] at (10,0.2) {\scriptsize \textcolor{orange}{$T_k$}};
\draw[dashed, orange] (g) -- (c);

\node[] (d) at (12,2.5) {\scriptsize Vertices of $R$};
\draw[dashed] (10,2.5) -- (d);
\end{tikzpicture}
\caption{\label{figure:case3}Construction of a progress measure - case~\ref{item:win-by-energyDis}.}
\end{center}
\end{figure}
  
The progress labelling $(\mu, \varphi)$ as defined above is a
  desired progress measure.
\end{proof}

\section{Computing progress measures by lifting}
\label{section:lifting}
In this section, we give a so-called lifting algorithm which identifies the winning sets for Dis and for Con by computing a progress measure on the winning set for Dis.

By the \emph{tree} of a progress labelling $(\mu, \varphi)$, we mean the ordered tree whose nodes are all prefixes of all sequences $\mu(v)$ as $v$ ranges over the vertices of the game graph, and such that every vertex $v$ labels the node $\mu(v)$ of the tree.  Let us say that progress labellings $(\mu, \varphi)$ and $(\mu', \varphi')$ are \emph{isomorphic} if and only if their (partially labelled ordered) trees are isomorphic and $\varphi = \varphi'$. 

We shall work with the following ordering on finite binary strings:
\[0 s < \varepsilon, \quad
  \varepsilon < 1 s, \quad
  b s < b s' \text{ if and only if } s < s',\]
where $\varepsilon$ denotes the empty string, $b$ ranges over binary digits, and $s, s'$ range over binary strings.

Recall that $n$ is the number of vertices, and $d$ (assumed even) is the number of priorities.  

Let $S_{n, d}$ be all sequences $\seq{m_{d - 1}, m_{d - 3}, \ldots, m_{\ell}}$ of binary strings such that:
\begin{itemize}
\item
$\ell$ is odd and $1 \leq \ell \leq d + 1$;
\item
$\sum_{i = \ell}^{d - 1} |m_i| \,\leq\, \lceil \lg n \rceil$;
\end{itemize}
and let us call a progress measurement, labelling or measure \emph{succinct} if and only if all the sequences $\seq{m_{d - 1}, m_{d - 3}, \ldots, m_{\ell}}$ involved are members of~$S_{n, d}$.

\begin{lemma}
\label{lemma:succinct-isomorphic-progress-labelling}
For every progress labelling, there exists a succinct isomorphic one.
\end{lemma}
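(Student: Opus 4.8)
The plan is to reduce the statement to a purely combinatorial fact about coding ordered trees---namely the succinct tree-coding result of Jurdzi\'nski and Lazi\'c~\cite{JL17}---and then to transport that coding back to progress labellings. First I would record two structural observations about the tree $\mathcal{T}$ of a progress labelling $(\mu,\varphi)$. Its height is at most $d/2$, since every sequence $\mu(v)$ carries one component per odd index among $d-1, d-3, \dots, 1$; and every leaf of $\mathcal{T}$ is labelled by at least one vertex, because a node with no children is a maximal prefix and hence equals some full sequence $\mu(v)$. Consequently $\mathcal{T}$ has at most $n$ leaves. The target succinct labelling will be obtained by keeping the \emph{shape} of $\mathcal{T}$ and the map $\varphi$ unchanged, and only re-encoding the branching directions (the linearly ordered labels $m_i$) as binary strings under the order $0s < \varepsilon < 1s$, so that the total length $\sum_{i=\ell}^{d-1}|m_i|$ along every root-to-node path is at most $\lceil \lg n\rceil$. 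Because the tree shape and $\varphi$ are preserved, the depth of each node, and hence the index $\ell$, is unchanged, so every constraint in the definitions of progress measurement and progress labelling (in particular $\ell \geq \pi(v)$, and the condition on $\ell$ when $\varphi(v) = \infty$) survives automatically, and the resulting labelling is isomorphic to $(\mu, \varphi)$ by construction.

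The combinatorial heart is a weighted, order-preserving coding of the children at a single node, which I would prove by induction. Suppose a node has children carrying subtree leaf-counts $n_1, \dots, n_t$ with $\sum_j n_j = N$. Then there are binary strings $w_1 < w_2 < \dots < w_t$ in the order above with $|w_j| \leq \lceil \lg N\rceil - \lceil \lg n_j\rceil$ for every $j$. To see this, pick the least index $p$ whose prefix sum $n_1 + \dots + n_p$ is at least $N/2$, set $w_p = \varepsilon$, and recurse on the children to the left (prepending the digit $0$) and on those to the right (prepending $1$). The choice of $p$ guarantees that both $\sum_{j<p} n_j$ and $\sum_{j>p} n_j$ are at most $N/2$, so each recursive call runs on a total weight $N' \leq N/2$ with $\lceil \lg N'\rceil \leq \lceil \lg N\rceil - 1$; prepending one digit then preserves the length bound, while $w_p = \varepsilon$ trivially satisfies it since $n_p \leq N$. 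The asymmetric order $0s < \varepsilon < 1s$ is exactly what makes the assignment order-preserving: left children precede $\varepsilon$, which precedes the right children, and the recursion preserves order within each side.

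I would then assemble the full coding by induction on $\mathcal{T}$: apply the single-node lemma at the root with $N$ equal to the number of leaves of $\mathcal{T}$, obtaining edge strings $w_j$, and recursively code each child subtree, which has $n_j \leq N$ leaves. A leaf lying under the $j$-th child receives total length at most $|w_j| + \lceil \lg n_j\rceil \leq \lceil \lg N\rceil$, establishing the bound $\lceil \lg n\rceil$ for the whole tree. Setting $\mu'(v)$ to be the sequence of edge strings along the root-to-$\mu(v)$ path and $\varphi' = \varphi$ then yields a succinct labelling: all of its sequences lie in $S_{n, d}$, and it is isomorphic to $(\mu, \varphi)$ since the underlying ordered tree and the vertex labels on its nodes are unchanged and $\varphi$ is untouched.

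The main obstacle is the single-node coding lemma together with the verification that the prescribed order on binary strings makes it order-preserving; this is precisely where the construction of Jurdzi\'nski and Lazi\'c~\cite{JL17} is used, and where the unusual ordering $0s < \varepsilon < 1s$ earns its keep. Everything else---pruning any unlabelled leaves to bound the leaf count by $n$, checking that the depth and hence $\ell$ are preserved, and confirming the isomorphism and succinctness conditions---is routine bookkeeping.
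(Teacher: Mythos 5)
Your proposal is correct and follows essentially the same route as the paper: the paper's proof consists precisely of the two observations you make first (the tree of a progress labelling has height at most $d/2$ and at most $n$ leaves) followed by a citation of \cite[Lemma~1]{JL17}, whose content is exactly the order-preserving binary-string coding you then reconstruct. The only difference is that you re-prove that coding lemma (correctly, via the median-child/$\varepsilon$ recursion) rather than invoking it as a black box.
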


\begin{proof}
This is an immediate consequence of \cite[Lemma~1]{JL17}, since for every progress labelling, its tree is of height at most $d / 2$ and has at most $n$ leaves.
\end{proof}

\begin{corollary}
\label{corollary:restricted-to-succinct}
Lemmas \ref{lemma:measure-to-decomposition} and \ref{lemma:decomposition-to-measure} hold when restricted to succinct progress measures.
\end{corollary}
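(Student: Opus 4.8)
The plan is to show that both directions of the equivalence between progress measures and strategy decompositions for Dis can be made to stay within the succinct world, by threading succinctness through the two inductive proofs that establish Lemmas~\ref{lemma:measure-to-decomposition} and~\ref{lemma:decomposition-to-measure}. The key observation is that Lemma~\ref{lemma:succinct-isomorphic-progress-labelling} lets us replace any progress labelling by a succinct isomorphic one, and isomorphism preserves exactly the data that the notions of \emph{progressive edge} and \emph{progress measure} depend on: the ordered-tree shape of the image of~$\mu$ (hence all comparisons $\mu(v) > \mu(u)|_{\pi(v)}$, $\mu(v) = \mu(u)|_{\pi(v)}$, and $\mu(v) = \mu(u)$, which are determined by relative positions of nodes in the tree) together with the function~$\varphi$ (which is left unchanged). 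First I would record this as a preliminary remark: if $(\mu, \varphi)$ is a progress measure and $(\mu', \varphi')$ is isomorphic to it, then an edge is progressive in one if and only if it is progressive in the other, so $(\mu', \varphi')$ is also a progress measure. This is the engine that drives the corollary.

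For the direction of Lemma~\ref{lemma:decomposition-to-measure} (from decompositions to measures), the cleanest route is to apply the lemma as already proved to obtain some progress measure, and then invoke Lemma~\ref{lemma:succinct-isomorphic-progress-labelling} once to replace it by a succinct isomorphic one; by the preliminary remark the result is still a progress measure, and it is succinct by construction. Thus the ``only if'' half needs essentially no new work beyond citing the isomorphism lemma. The slightly more delicate direction is Lemma~\ref{lemma:measure-to-decomposition} (from measures to decompositions), because here succinctness of the hypothesis must be shown to suffice: given a succinct progress measure, we want a strategy decomposition. But the proof of Lemma~\ref{lemma:measure-to-decomposition} never uses any property of the measure beyond which edges are progressive and which vertices carry $\varphi(v) = \infty$; its inductive constructions (the sets~$R$, $R'$, $R''$, the traps, and the restrictions of the measure to subgames) are all phrased purely in terms of progressiveness and the tree order. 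Restricting a succinct measure to an induced subgame yields a succinct measure on that subgame, so the induction goes through verbatim with ``succinct'' inserted throughout, and I would simply remark that the argument is unchanged.

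I expect the main obstacle to be precisely the verification that isomorphism of progress labellings preserves progressiveness of edges, because one must check this against all three clauses of the definition and confirm that each clause refers only to isomorphism-invariant data. Clauses~1 and~2 compare truncated tuples under the lexicographic/tree order and test $\pi(v)$ and $\varphi(v) = \infty$, all of which are invariant; clause~3 tests $\mu(v) = \mu(u)$ together with the arithmetic condition $\varphi(v) + c(v,u) \geq \varphi(u)$, and here the equality of tuples is invariant (two vertices sit at the same node in both trees) while the numerical part involves only~$\varphi$, which the isomorphism fixes. Once this invariance is established, the corollary follows by combining it with the two lemmas as above: the forward direction uses the isomorphism lemma to succinctify, and the backward direction observes that the existing induction respects succinctness under subgame restriction. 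I would therefore present the proof of the corollary in two short paragraphs, one per direction, leaning on a single lemma-style remark about isomorphism-invariance of progressiveness that does the real work.
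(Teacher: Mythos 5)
Your proposal is essentially the paper's intended argument: the corollary is stated without a separate proof precisely because it follows from Lemma~\ref{lemma:succinct-isomorphic-progress-labelling} together with the observation you isolate as your ``preliminary remark,'' namely that isomorphic progress labellings (same ordered-tree shape, same $\varphi$) have exactly the same progressive edges, so succinctifying a progress measure yields a succinct progress measure. That observation settles the restricted form of Lemma~\ref{lemma:decomposition-to-measure}: apply the lemma, then Lemma~\ref{lemma:succinct-isomorphic-progress-labelling}, then invariance.

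One correction, though: you have the relative difficulty of the two directions backwards. The restricted form of Lemma~\ref{lemma:measure-to-decomposition} asks for a strategy decomposition given a \emph{succinct} progress measure; since a succinct progress measure is in particular a progress measure, Lemma~\ref{lemma:measure-to-decomposition} applies verbatim and nothing needs to be re-proved --- no threading of succinctness through the induction is needed. Moreover, the claim you lean on in that detour, that restricting a succinct measure to an induced subgame yields a succinct measure on that subgame, is not quite right as stated: succinctness is defined relative to the bound $\lceil \lg n \rceil$, where $n$ is the number of vertices of the whole game, so the restriction to a subgame with $n' < n$ vertices need not satisfy the tighter bound $\lceil \lg n' \rceil$ appropriate to that subgame. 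This inaccuracy is harmless only because the detour containing it is unnecessary; replacing that direction by the one-line inclusion argument makes your proof correct and in full agreement with the paper.
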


We now order progress measurements lexicographically:
\begin{multline*}
\big(\seq{m_{d - 1}, m_{d - 3}, \ldots, m_{\ell}}, e\big) < \big(\seq{m'_{d - 1}, m'_{d - 3}, \ldots, m'_{\ell'}}, e'\big) \\ 
\text{ if and only if} \\
\text{either } \seq{m_{d - 1}, m_{d - 3}, \ldots, m_{\ell}} < \seq{m'_{d - 1}, m'_{d - 3}, \ldots, m'_{\ell'}}, \\
\text{or } \seq{m_{d - 1}, m_{d - 3}, \ldots, m_{\ell}} = \seq{m'_{d - 1}, m'_{d - 3}, \ldots, m'_{\ell'}} \text{ and } e < e'
\end{multline*}
and we extend them by a new greatest progress measurement $(\top, \infty)$.
We then revise the set of progress labellings to allow the extended progress measurements,
and we (partially) order it pointwise:
\begin{multline*}
(\mu, \varphi) \leq (\mu', \varphi')
\text{ if and only if,} \\
\text{ for all $v \in V$,} 
\big(\mu(v), \varphi(v)\big) \leq \big(\mu'(v), \varphi(v')\big). 
\end{multline*}
We also revise the definition of a progress measure by stipulating that an edge $(v, u)$ which involves the progress measurement $(\top, \infty)$ is progressive if and only if the progress measurement of $v$ is $(\top, \infty)$.

For any succinct progress labelling $(\mu, \varphi)$ and edge $(v, u)$, we set $\mathrm{lift}(\mu, \varphi, v, u)$ to be the minimum succinct progress measurement $\big(\seq{m_{d - 1}, m_{d - 3}, \ldots, m_{\ell}}, e\big)$ which is at least $\big(\mu(v), \varphi(v)\big)$ and such that $(v, u)$ is progressive in the updated succinct progress labelling
\[\Big(\mu\big[v \mapsto \seq{m_{d - 1}, m_{d - 3}, \ldots, m_{\ell}}\big], \varphi[v \mapsto e]\Big)\;.\]
For any vertex $v$, we define an operator $\mathrm{Lift}_v$ on succinct progress labellings as follows:
\[\mathrm{Lift}_v(\mu, \varphi)(w) =
\begin{cases}
\big(\mu(w), \varphi(w)\big) & \text{if } w \neq v, \\
\min_{(v, u) \in E} \mathrm{lift}(\mu, \varphi, v, u) & \text{if Dis owns } w = v, \\
\max_{(v, u) \in E} \mathrm{lift}(\mu, \varphi, v, u) & \text{if Con owns } w = v.
\end{cases}\]

\begin{theorem}[Correctness of lifting algorithm]
\ 
\begin{enumerate}
\item 
The set of all succinct progress labellings ordered pointwise is a complete lattice.
\item
Each operator $\mathrm{Lift}_v$ is inflationary and monotone.
\item
From every succinct progress labelling $(\mu, \varphi)$, 
every sequence of applications of operators $\mathrm{Lift}_v$
eventually reaches the least simultaneous fixed point of all
$\mathrm{Lift}_v$ that is greater than or equal to~$(\mu, \varphi)$.
\item
A succinct progress labelling $(\mu, \varphi)$ is a simultaneous fixed point of all operators $\mathrm{Lift}_v$
if and only if it is a succinct progress measure.
\item
If $(\mu^*, \varphi^*)$ is the least succinct progress measure,
then $\{v \,:\, \big(\mu^*(v), \varphi^*(v)\big) \neq (\top, \infty)\}$ is the set of winning positions for Dis.
\end{enumerate}
\end{theorem}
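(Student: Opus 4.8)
The plan is to handle the five items almost independently, since the first four are order-theoretic or definitional while the fifth carries the semantic content and is where I expect the real work to lie.

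For item~1, I would first note that the extended succinct progress measurements form a \emph{finite} set totally ordered by the lexicographic order together with the adjoined top $(\top, \infty)$, hence trivially a complete lattice; moreover, for each fixed vertex~$v$, the measurements obeying the progress-labelling side-conditions ($\ell \geq \priority(v)$, with $\ell$ minimal when the energy component is~$\infty$) are a subset of this finite linear order, so again a finite complete lattice. Succinct progress labellings are exactly the functions picking, for each~$v$, an element of its per-vertex lattice, so their pointwise order is a product of finitely many finite complete lattices; meets and joins are computed coordinatewise and remain inside each per-vertex lattice, which gives the complete lattice. For item~2, inflationarity is immediate: every $\mathrm{lift}(\mu, \varphi, v, u)$ is by definition $\geq \big(\mu(v), \varphi(v)\big)$, and a minimum (Dis) or maximum (Con) over the nonempty edge set of such values is still $\geq \big(\mu(v), \varphi(v)\big)$, while off~$v$ the operator is the identity. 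The crux of monotonicity is a sublemma: for a fixed edge $(v, u)$, the set of measurements at~$v$ making $(v, u)$ progressive is upward closed, and this up-set only moves upward as the measurement at~$u$ increases. Both facts follow by inspecting the three progressiveness conditions, the delicate case being condition~2 (with $\varphi(v) = \infty$ and the $\priority(v)$-truncation), together with the revised $\top$-rule, which guarantees the up-set is nonempty since $(\top, \infty)$ is always progressive at~$v$. Granting this, $\mathrm{lift}$ is monotone in $(\mu, \varphi)$, and pointwise minima and maxima of monotone maps are monotone.

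Items~3 and~4 are then standard. For item~4, I would observe that $\mathrm{lift}(\mu, \varphi, v, u) = \big(\mu(v), \varphi(v)\big)$ precisely when $(v, u)$ is already progressive at the current labelling; hence $\mathrm{Lift}_v$ fixes~$v$ iff \emph{some} outgoing edge (Dis) or \emph{every} outgoing edge (Con) is progressive, which is exactly the progress-measure condition at~$v$. For item~3, let $(\mu^\dagger, \varphi^\dagger)$ be the least simultaneous fixed point above the starting labelling; it exists because the operators are monotone and inflationary on a complete lattice (the all-top labelling being one such fixed point). By inflationarity the iterates are nondecreasing, and by monotonicity together with $\mathrm{Lift}_v(\mu^\dagger, \varphi^\dagger) = (\mu^\dagger, \varphi^\dagger)$ they never exceed $(\mu^\dagger, \varphi^\dagger)$; in a finite lattice the sequence stabilises, and the stable value, being fixed by every operator once quiescence is reached, is a simultaneous fixed point lying between the start and $(\mu^\dagger, \varphi^\dagger)$, hence equal to it. Applying item~3 from the bottom labelling, together with item~4, also yields existence of a least succinct progress measure.

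Item~5 is the main obstacle. Write $D = \{v : (\mu^*(v), \varphi^*(v)) \neq (\top, \infty)\}$. For $D \subseteq W_{\mathrm{Dis}}$, I would exploit the revised rule that an edge touching $(\top, \infty)$ is progressive only when its \emph{source} carries $(\top, \infty)$: this forces every progressive edge leaving a non-top Dis-vertex, and every edge leaving a non-top Con-vertex, to stay in~$D$. Hence $D$ is a trap for Con and the restriction of $(\mu^*, \varphi^*)$ to~$D$ is a genuine top-free progress measure on the subgame~$D$; Lemma~\ref{lemma:measure-to-decomposition} then gives a strategy decomposition of~$D$ for Dis, and Lemma~\ref{lemma:strategy-for-Dis} (applicable as $D$ is a trap for Con) shows Dis wins from every vertex of~$D$. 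For $W_{\mathrm{Dis}} \subseteq D$, I would take the partition from Lemma~\ref{lemma:existence-strategy-decomposition}, apply Lemma~\ref{lemma:decomposition-to-measure} and Lemma~\ref{lemma:succinct-isomorphic-progress-labelling} to obtain a succinct progress measure on~$W_{\mathrm{Dis}}$, and extend it to~$\vertices$ by assigning $(\top, \infty)$ to every vertex of~$W_{\mathrm{Con}}$; this extension is again a succinct progress measure, since on~$W_{\mathrm{Dis}}$ nothing changes, Con cannot leave the trap~$W_{\mathrm{Dis}}$, and every top vertex satisfies its clause under the revised $\top$-rule. As $(\mu^*, \varphi^*)$ is the \emph{least} succinct progress measure it lies pointwise below this extension, so it is non-top throughout~$W_{\mathrm{Dis}}$, giving $W_{\mathrm{Dis}} \subseteq D$; combining the inclusions yields $D = W_{\mathrm{Dis}}$. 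The points needing care are the precise use of the revised $\top$-progressiveness rule in both directions and the verification that the restricted and extended labellings satisfy every clause of the progress-measure definition.
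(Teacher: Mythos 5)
Your proposal follows essentially the same route as the paper's proof: items 1--4 by the same finite-product-lattice and inflationary/monotone fixed-point arguments, and item 5 by the same two inclusions --- restricting $(\mu^*, \varphi^*)$ to the non-top set $D$, which the revised $\top$-rule makes a trap for Con carrying a top-free progress measure so that Lemmas~\ref{lemma:measure-to-decomposition} and~\ref{lemma:strategy-for-Dis} apply, and conversely extending by $(\top, \infty)$ the succinct progress measure obtained on Dis's winning set from Lemmas~\ref{lemma:existence-strategy-decomposition}, \ref{lemma:decomposition-to-measure} and Corollary~\ref{corollary:restricted-to-succinct}, then invoking leastness of $(\mu^*, \varphi^*)$. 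Your item 5 in fact spells out details (the trap property of $W_{\mathrm{Dis}}$, the verification that the top-extension is a progress measure) that the paper leaves implicit.

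There is, however, one concrete false step: the first half of your monotonicity sublemma, namely that the set of measurements at $v$ making $(v, u)$ progressive is upward closed. Take $\pi(v)$ odd, let $\mu(v) = \mu(u)$ be a sequence ending at index $\ell = \pi(v)$, and let $\varphi(v), \varphi(u)$ be finite with $\varphi(v) + c(v, u) \geq \varphi(u)$; the edge is progressive by condition 3, yet the strictly larger measurement $\big(\mu(v), \infty\big)$ --- which is a legal measurement at $v$, since $\ell$ is the least odd index $\geq \pi(v)$ --- satisfies none of the three conditions (condition 1 fails because $\mu(v) = \mu(u)|_{\pi(v)}$, condition 2 requires $\pi(v)$ even, condition 3 requires $\varphi(v) \neq \infty$), so progressiveness is lost. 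Fortunately this half is never needed: monotonicity of $\mathrm{lift}$ follows from the second half alone, i.e., that progressiveness is preserved when the target's measurement decreases (exactly the observation the paper uses). Indeed, if $(\mu, \varphi) \leq (\mu', \varphi')$, then any candidate for $\mathrm{lift}(\mu', \varphi', v, u)$ is at least $\big(\mu(v), \varphi(v)\big)$ and, by that observation, still makes $(v, u)$ progressive against the smaller target measurement $\big(\mu(u), \varphi(u)\big)$, so it is also a candidate for $\mathrm{lift}(\mu, \varphi, v, u)$; the minimum over a larger candidate set is smaller. So you should simply delete the upward-closure claim; with that excision the proof stands and coincides with the paper's.
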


\begin{proof}
\begin{enumerate}
\item 
  The partial order of all succinct progress labellings is 
  the pointwise product of $n$ copies of 
  the finite linear order of all succinct progress measurements.

\item
  We have inflation, i.e.
  $$\mathrm{Lift}_v(\mu, \varphi)(w) \geq \big(\mu(w), \varphi(w)\big)$$ 
  by  
  the definitions of $\mathrm{Lift}_v(\mu, \varphi)(w)$ and
  $\mathrm{lift}(\mu, \varphi, v, u)$.
  
  For monotonicity, supposing $(\mu, \varphi) \leq (\mu', \varphi')$, 
  it suffices to show that, for every edge $(v, u)$, we have
  $\mathrm{lift}(\mu, \varphi, v, u) \leq \mathrm{lift}(\mu', \varphi', v, u)$, which is in turn implied by the straightforward observation that, whenever an edge is progressive with respect to a progress labelling, it remains progressive after any lessening of the progress measurement of its target vertex.

\item
  This holds for any family of 
  inflationary monotone operators on a finite complete lattice.
  Consider any such maximal sequence from $(\mu, \varphi)$.
  It is an upward chain from $(\mu, \varphi)$ to some $(\mu^*, \varphi^*)$ 
  which is a simultaneous fixed point of all the operators.
  For any $(\mu', \varphi') \geq (\mu, \varphi)$ which is also a simultaneous fixed point,
  a simple induction confirms that $(\mu^*, \varphi^*) \leq (\mu', \varphi')$.

\item
  Here we have a rewording of the definition of a succinct progress measure.

\item
  Let $W = \{v \,:\, \big(\mu^*(v), \varphi^*(v)\big) \neq (\top, \infty)\}$.
  The set of winning positions for Dis is contained in~$W$
  by Lemma~\ref{lemma:existence-strategy-decomposition},
  Lemma~\ref{lemma:decomposition-to-measure} and
  Corollary~\ref{corollary:restricted-to-succinct}, because
  $(\mu^*, \varphi^*)$ is the least succinct progress measure. 
  
  Since $(\mu^*, \varphi^*)$ is a progress measure, 
  we have that, for every progressive edge $(v, u)$, 
  if $\big(\mu^*(v), \varphi^*(v)\big) \neq (\top, \infty)$ 
  then $\big(\mu^*(u), \varphi^*(u)\big) \neq (\top, \infty)$.
  In order to show that Dis has a winning strategy from every vertex in~$W$, 
  it remains to apply Lemmas \ref{lemma:measure-to-decomposition} and 
  \ref{lemma:strategy-for-Dis} to the subgame consisting of the vertices 
  in~$W$. 
\end{enumerate}
\end{proof}

\begin{table}
\begin{center}
  \fbox{\parbox{0.9\columnwidth}{
   \begin{enumerate}
     \item Initialise $(\mu, \varphi)$ to the least succinct progress labelling 
    $$(v \mapsto \seq{}, v \mapsto 0)$$

    \item While $\mathrm{Lift}_v(\mu, \varphi) \neq (\mu, \varphi)$ for some $v$,
    update $(\mu, \varphi)$ to become $\mathrm{Lift}_v(\mu, \varphi)$.

    \item Return the set $W_{\mathrm{Dis}} = \{v \,:\, \big(\mu(v), \varphi(v)\big) \neq (\top, \infty)\}$ of winning positions for Dis.
    \end{enumerate}
  }}
  \caption{The lifting algorithm.}
  \label{table:algorithm}
  \end{center}
\end{table}

\begin{lemma}[Jurdzi\'nski and Lazi\'c~\cite{JL17}]
  \label{lemma:size-of-Snd}
  Depending on the asymptotic growth of~$d$ as a function of~$n$, 
  the size of the set $S_{n, d}$ is as follows:
  \begin{enumerate}
  \item
    $O\left(n^{1+o(1)}\right)$ if $d = o(\log n)$;

  \item
    \label{enumerate:d-delta-log-n}
    $\Theta\left(n^{\lg(\delta+1) + \lg(e_\delta) + 1} \middle/
      \sqrt{\log n}\right)$
    if $d/2 = \lceil \delta \lg n \rceil$,
    for some positive constant~$\delta$, and 
    where $e_{\delta} = (1 + 1/\delta)^\delta$; 

  \item
    \label{enumerate:d-omega-log-n}
    $O\left(d n^{\lg (d/{\lg n}) + \lg e + o(1)}\right)$ if $d = \omega(\log n)$. 
  \end{enumerate}  
\end{lemma}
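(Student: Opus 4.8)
The plan is to reduce the cardinality of $S_{n,d}$ to a single binomial coefficient and then apply standard estimates in each of the three regimes. First I would count exactly. A sequence $\seq{m_{d-1}, m_{d-3}, \dots, m_\ell}$ in $S_{n,d}$ has $k := (d+1-\ell)/2$ components, and as $\ell$ ranges over the odd values $1, 3, \dots, d+1$ the number $k$ ranges over $0, 1, \dots, d/2$. For a fixed number $k$ of components and a fixed total length $s = \sum_i |m_i|$, the number of admissible tuples is $2^s \binom{s+k-1}{k-1}$, since one chooses the $s$ bits freely ($2^s$ ways) and splits them into $k$ ordered (possibly empty) blocks by stars and bars. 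Writing $L := \lceil \lg n\rceil$ and $K := d/2$, this yields the exact formula
\[
|S_{n,d}| \;=\; \sum_{k=0}^{K}\ \sum_{s=0}^{L} 2^s \binom{s+k-1}{k-1}\;.
\]

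Next I would collapse the double sum without losing any polynomial factor. Summing over $k$ first, the hockey-stick identity gives $\sum_{k=1}^{K}\binom{s+k-1}{s} = \binom{s+K}{s+1}$ \emph{exactly}, and the $k=0$ term contributes only a harmless additive constant, so that $|S_{n,d}| = \Theta\big(\sum_{s=0}^{L} 2^s \binom{s+K}{s+1}\big)$. The remaining summand $2^s\binom{s+K}{s+1}$ has ratio of consecutive terms at least $2$ (because $K \geq 1$), so the sum is a geometric-type series dominated by its top term; hence
\[
|S_{n,d}| \;=\; \Theta\!\left(2^{L}\binom{L+K}{K-1}\right)\;.
\]
Since $2^{\lceil \lg n\rceil} = \Theta(n)$, the problem reduces to a two-sided estimate of the single coefficient $\binom{\lceil \lg n\rceil + d/2}{d/2 - 1}$, with the relative sizes of $L$ and $K$ selecting the regime.

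Finally I would estimate that binomial in each case. When $d = o(\log n)$, i.e. $K = o(L)$, the bound $\binom{L+K}{K-1} \leq (e(L+K)/(K-1))^{K-1}$ gives exponent $O\big(d\,\lg(\lg n/d)\big) = o(\lg n)$ (using $t\lg(1/t)\to 0$ as $t = d/\lg n \to 0$), so the binomial is $n^{o(1)}$ and $|S_{n,d}| = O(n^{1+o(1)})$. When $d = \omega(\log n)$, i.e. $K = \omega(L)$, the same bound yields exponent $\lg n\,(\lg e + \lg(d/\lg n) - 1 + o(1))$, hence the binomial is $n^{\lg(d/\lg n)+\lg e - 1 + o(1)}$ and, multiplied by $\Theta(n)$, gives $O(n^{\lg(d/\lg n)+\lg e+o(1)})$, which lies within the claimed $O(d\,n^{\lg(d/\lg n)+\lg e+o(1)})$. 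When $d/2 = \lceil \delta \lg n\rceil$, I would instead use the precise expansion $\binom{N}{pN} \sim 2^{N H(p)}/\sqrt{2\pi N p(1-p)}$ with $N = (1+\delta)\lg n$ and $p = \delta/(1+\delta)$; a short computation gives $N H(p) = \lg n\,\big((1+\delta)\lg(1+\delta) - \delta\lg\delta\big)$ and $Np(1-p) = \delta \lg n/(1+\delta) = \Theta(\lg n)$, and the identity $(1+\delta)\lg(1+\delta) - \delta\lg\delta = \lg(\delta+1) + \lg(e_\delta)$ (obtained by expanding $\lg e_\delta = \delta\lg(1+1/\delta)$) then yields $|S_{n,d}| = \Theta\big(n^{\lg(\delta+1)+\lg(e_\delta)+1}/\sqrt{\log n}\big)$.

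The step needing the most care is the $\Theta$ estimate in the middle regime: unlike the outer two it demands matching upper and lower bounds carrying the correct $1/\sqrt{\log n}$ factor. This is exactly why the collapse of the double sum must be done by the exact hockey-stick identity (rather than by bounding the inner sum by the number of terms times the maximum, which would inject a spurious $\Theta(\log n)$ factor), and why the binomial must be handled by the full Stirling formula including its square-root correction rather than by the crude $(e(a+b)/b)^b$ estimate that suffices in the first and third regimes. The remaining algebraic identity equating the Stirling exponent with $\lg(\delta+1)+\lg(e_\delta)$ is then a routine verification.
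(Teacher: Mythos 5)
The paper itself offers no proof of this lemma --- it is imported verbatim from~\cite{JL17} --- so your argument has to stand on its own. Most of it does: the exact count $\sum_{k=0}^{K}\sum_{s=0}^{L}2^{s}\binom{s+k-1}{k-1}$ with $L=\lceil\lg n\rceil$, $K=d/2$, the hockey-stick collapse, and the geometric domination giving $|S_{n,d}|=\Theta\bigl(2^{L}\binom{L+K}{K-1}\bigr)$ are all correct, and regimes~1 and~2 are handled soundly. In regime~2 your care is justified and sufficient: the ceiling-induced $O(1)$ perturbations of the binomial's parameters cost only constant factors there, because the sensitivity of $\lg\binom{N}{b}$ to a unit change of $b$ is $\lg\frac{N-b}{b}=O(1)$ when $b/N$ is a constant fraction, so full Stirling indeed yields the two-sided bound with the $1/\sqrt{\log n}$ factor.

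Regime~3, however, contains a genuine error, located exactly where you assert that your bound ``lies within'' the stated one. You pass from $\lg\binom{L+K}{L+1}\leq(L+1)\bigl[\lg e+\lg(L+K)-\lg(L+1)\bigr]$ to ``exponent $\lg n\,(\lg e+\lg(d/\lg n)-1+o(1))$''. But $c:=L+1-\lg n\in[1,2)$, and the discarded term $c\,\bigl[\lg e+\lg(d/\lg n)-1\bigr]$ is $o(\lg n)$ --- hence absorbable into $n^{o(1)}$ --- only when $\lg(d/\lg n)=o(\lg n)$, i.e.\ only when $d=n^{o(1)}$. The hypothesis $d=\omega(\log n)$ also allows $d$ polynomial in $n$, and $d=\Theta(n)$ is precisely the interesting case for games. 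There your conclusion is false: for $d=\omega(\log^2 n)$ Stirling gives
\begin{equation*}
|S_{n,d}| \;=\; \left(\frac{ed}{2\lg n}\right)^{c}\cdot n^{\lg(d/\lg n)+\lg e\pm o(1)},
\end{equation*}
so for $d=n$ along $n=2^{k+1/2}$ (where $c=3/2$) the true size exceeds your claimed $O\bigl(n^{\lg(d/\lg n)+\lg e+o(1)}\bigr)$ by a factor $n^{3/2-o(1)}$. An equivalent way to see the obstruction: changing the lower index of $\binom{L+K}{L+1}$ by one changes its value by a factor $\approx d/(2\lg n)$, so once $d$ is large no conclusion insensitive to replacing $\lceil\lg n\rceil$ by $\lg n$ can be drawn. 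This is why the statement carries the prefactor $d$: it is not dispensable slack that your stronger bound renders superfluous, but a correction for the ceiling. What your computation honestly yields in general is $O\bigl(d^{2}\,n^{\lg(d/\lg n)+\lg e+o(1)}\bigr)$, and it recovers your prefactor-free bound only under the additional assumption $d=n^{o(1)}$. (To be fair, the transcribed statement is itself fragile in this range --- with $c$ bounded away from $1$ and $d=\Theta(n)$ even the $d$-prefactored bound is exceeded by $n^{1/2-o(1)}$ --- but a correct proof must track the factor $\bigl(ed/(2\lg n)\bigr)^{c}$ rather than silently drop it, and then either state the restriction on $d$ or settle for the $d^{2}$ prefactor.)
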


\begin{theorem}[Complexity of lifting algorithm]
  Depending on the asymptotic growth of~$d$ as a function of~$n$, 
  the running time of the algorithm is as follows:
  \begin{enumerate}
  \item
    \label{enumerate:d-o-lg-n}
    $O\left(m n^{2+o(1)} C\right)$ if $d = o(\log n)$;

  \item
    \label{enumerate:d-less-lg-n}
    $O\left(m n^{\lg(\delta+1) + \lg(e_\delta) + 2} C \cdot \log d \cdot \sqrt{\log n}\right)$
    if $d \leq 2 \lceil \delta \lg n \rceil$, for some positive
    constant~$\delta$;

  \item
    \label{enumerate:d-omega-lg-n}
    $O\left(dm n^{\lg(d/{\lg n}) + 2.45} C\right)$ if $d = \omega(\log \eta)$. 
  \end{enumerate}
  The algorithm works in space $O(n \cdot \log n \cdot \log d)$.
\end{theorem}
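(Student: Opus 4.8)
The plan is to charge the running time to the two quantities that govern any progress-measure lifting procedure: the number of elementary lifting steps, and the cost of a single such step. First I would bound the number of steps. By parts~1--3 of the preceding correctness theorem, the set of succinct progress labellings is a finite complete lattice, each $\mathrm{Lift}_v$ is inflationary and monotone, and the algorithm of Table~\ref{table:algorithm} halts at the least fixed point. Consequently the value $\big(\mu(v),\varphi(v)\big)$ of each vertex only ever moves upward through the finite linear order of succinct progress measurements, so it can be raised at most $|M|$ times, where $|M|$ is the number of distinct succinct progress measurements. Since a succinct measurement is a sequence in $S_{n,d}$ paired with an energy value $e \in \{0,1,\dots,nC\}\cup\{\infty\}$ (together with the adjoined top $(\top,\infty)$), we have $|M| = O\big(|S_{n,d}|\cdot nC\big)$.

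Next I would fix an efficient, worklist-based implementation so that the total number of edge examinations is $O(m\,|M|)$ rather than the naive $O(nm\,|M|)$. Following the counter-based scheme of Jurdzi\'nski~\cite{Jur00} and Brim et al.~\cite{BCDGR11}, each vertex $v$ maintains a count of the out-edges that are currently progressive at its value; the expensive $\min$/$\max$ recomputation over all of $v$'s out-edges is performed only when this count drops to zero, that is, exactly when $v$ is actually lifted. Each edge $(v,u)$ is then touched $O(1)$ times per lift of its target $u$ (a counter decrement), and $v$'s out-edges are rescanned $O(1)$ times per lift of $v$; summing the first over all lifts gives $|M|\sum_u \mathrm{indeg}(u) = O(m\,|M|)$ and the second gives $|M|\sum_v \mathrm{outdeg}(v) = O(m\,|M|)$.

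The remaining ingredient is the per-operation cost. I would argue that a single $\mathrm{lift}(\mu,\varphi,v,u)$---comparing two measurements under the lexicographic order, forming the $\pi(v)$-truncation, and testing the three progressiveness conditions---can be carried out in time $O(\log d \cdot \log n)$ on the succinct binary encoding of~\cite{JL17}: a measurement is a sequence of total bit-length at most $\lceil \lg n\rceil$ laid out over the $d/2$ odd priority positions, so navigating to the relevant component costs an $O(\log d)$ factor and manipulating its $O(\log n)$ bits costs the other, while the energy arithmetic on $\varphi$ fits within the same budget. Multiplying the step count $O(m\,|M|) = O\big(m\,|S_{n,d}|\,nC\big)$ by this cost and substituting the three estimates of $|S_{n,d}|$ from Lemma~\ref{lemma:size-of-Snd} yields the three stated running times; in the regime $d = \omega(\log n)$ one absorbs the polylogarithmic and $n^{o(1)}$ factors into the exponent, using $\lg e + 1 < 2.45$ to obtain $O\big(dmn^{\lg(d/\lg n)+2.45}C\big)$. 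For the space bound, only the current labelling and the counters need be stored: each of the $n$ vertices holds one succinct measurement, whose sequence content occupies $O(\log n)$ bits spread over the $O(d)$ priority positions---accounting for a $\log d$ bookkeeping factor---giving $O(n\log n\log d)$ overall.

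Finally, I would flag where the real care is needed. The genuinely delicate steps are (i) the amortized counter argument that keeps the edge-examination count at $O(m\,|M|)$, since a naive re-evaluation of the $\min$/$\max$ at every successor update would reintroduce an out-degree factor and break the bound; and (ii) verifying that every elementary operation---in particular computing the least succinct measurement making a given edge progressive---really does run in polylogarithmic time on the encoding of~\cite{JL17}, as this is what makes the $|S_{n,d}|$ factor, and hence the quasi-polynomial dependence on~$d$, the dominant term. The arithmetic of matching the constants (turning $\lg e + 1 \approx 2.4427$ together with the absorbed lower-order terms into the clean exponent $2.45$, and aligning the $\sqrt{\log n}$ and $\log d$ factors in the $d \le 2\lceil \delta \lg n\rceil$ case) is then routine bookkeeping layered on top of Lemma~\ref{lemma:size-of-Snd}.
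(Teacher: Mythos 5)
Your proposal is correct and follows essentially the same route as the paper's proof: bound the number of lifts per vertex by the length of the linear order of succinct measurements, $O\left(|S_{n,d}| \cdot nC\right)$, bound the cost of lifting a vertex $v$ by $O\left(\mathrm{deg}(v)\cdot(\log n \cdot \log d + \log C)\right)$ via the polylogarithmic successor computation of Jurdzi\'nski and Lazi\'c, multiply, and then substitute the estimates of Lemma~\ref{lemma:size-of-Snd}, absorbing the logarithmic factors (unit-cost RAM for $\log C$, the $o(1)$ or $2.45$ exponent for the rest, using $\lg e < 1.4427$). Your counter-based worklist scheme is additional implementation detail beyond the paper's accounting (which simply charges $\mathrm{deg}(v)$ per lift of $v$), but the resulting bounds and the overall structure of the argument coincide.
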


\begin{proof}
The work space requirement is dominated by the number of bits needed to store a single succinct progress labelling, which is at most $n (\lceil \lg n \rceil \lceil \lg d \rceil + \lceil \lg (n C) \rceil)$.

Since bounded-depth successors of elements of $S_{n, d}$ are computable in time $O(\log n \cdot \log d)$ (cf.\ the proof of \cite[Theorem~7]{JL17}, the $\mathrm{Lift}_v$ operators can be implemented to work in time $O(\mathrm{deg}(v) \cdot (\log n \cdot \log d + \log C))$.
It then follows, observing that the algorithm lifts each vertex at most 
$|S_{n, d}| (n C + 1)$ times, that its running time is bounded by
\begin{multline*}
  O\left(\sum_{v \in V} 
    \mathrm{deg}(v) \cdot (\log n \cdot \log d + \log C)
    |S_{n, d}| (n C + 1)\right) =
  \\ 
  O\left(m n C (\log n \cdot \log d + \log C) |S_{n, d}|\right)\;.
\end{multline*}
From there, the various stated bounds are obtained by applying 
Lemma~\ref{lemma:size-of-Snd},
and by suppressing some of the multiplicative factors that are 
logarithmic in the bit-size of the input.
Suppressing the $\log C$ factor is justified by using the
unit-cost RAM model, which is the industry standard in algorithm
analysis.
The reasons for suppressing the $\log n$ and $\log d$ factors are
more varied:
in case~\ref{enumerate:d-o-lg-n}, they are absorbed by the $o(1)$ 
term in the exponent of~$n$, and 
in case~\ref{enumerate:d-omega-lg-n}, they are absorbed in the
$2.45$ term in the exponent of~$n$, because $\lg e < 1.4427$. 
\end{proof}

\section{From winning sets to strategy decompositions for Con}
\label{section:winsets-to-decompositions}
The pseudo-quasi-polynomial lifting algorithm computes the least progress
measure and hence, by Lemmas~\ref{lemma:measure-to-decomposition}
and~\ref{lemma:strategy-for-Dis}, it can be easily adapted to
synthesize a winning strategy for Dis from all vertices in her winning
set.
In this section we tackle the problem of strategy synthesis for Con.
By (the proof of) Lemma~\ref{lemma:strategy-for-Con}, in order to
synthesize a winning strategy for Con, it suffices to compute a
strategy decomposition for him.
We argue that this can also be achieved in pseudo-quasi-polynomial time. 

\begin{theorem}[Complexity of computing strategy decompositions]
\label{theorem:pseudo-quasi-decompositions-for-Con}
  There is a pseudo-quasi-polynomial algorithm that computes strategy
  decompositions for both players on their winning sets. 
\end{theorem}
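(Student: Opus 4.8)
The plan is to produce the two decompositions by completely different routes: Dis's decomposition is read off directly from the progress measure computed by the lifting algorithm, while Con's decomposition is built by a McNaughton--Zielonka-style recursion in which every recursive call that is needed only to \emph{locate} a winning set is delegated to the lifting algorithm, so that the genuine recursive \emph{construction} calls are made only on pairwise disjoint subgames.

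For Dis I would run the lifting algorithm of Table~\ref{table:algorithm} on the whole game, obtaining the least succinct progress measure $(\mu^*,\varphi^*)$ and, by the last item of the lifting correctness theorem, the set $W_{\mathrm{Dis}}=\{v:(\mu^*(v),\varphi^*(v))\neq(\top,\infty)\}$. Restricted to $W_{\mathrm{Dis}}$ the labelling is a genuine (finite-valued) progress measure, so I feed it into the construction that underlies Lemma~\ref{lemma:measure-to-decomposition}: that construction recurses on the number of distinct priorities, at each level identifying the set of minimal-tuple vertices and computing one reachability attractor, and so runs in time polynomial in $n$ and $d$ once the measure is given. Thus Dis's entire work is one pseudo-quasi-polynomial lifting call followed by polynomial post-processing.

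For Con I would introduce a recursive routine $\mathrm{DecomposeCon}(W)$ whose invariant is that \emph{Con wins from every vertex of the subgame~$W$}; the top-level call is on $W_{\mathrm{Con}}=V\setminus W_{\mathrm{Dis}}$, which by determinacy (Lemmas~\ref{lemma:strategy-for-Dis}--\ref{lemma:existence-strategy-decomposition}) is exactly Con's winning set and is moreover a trap for Dis, as required to later invoke Lemma~\ref{lemma:strategy-for-Con}. Writing $b$ for the top priority of $W$: if $b$ is odd I compute a positional mean-payoff winning strategy $\lambda$ for Con on $W$ (one exists because from a Con-winning vertex Dis cannot force a non-negative mean payoff), set $B$ to the priority-$b$ vertices, $T$ to Con's attractor to $B$ with strategy $\tau$, $R=W\setminus(B\cup T)$, and return $\big((R,\mathrm{DecomposeCon}(R)),(T,\tau),B,\lambda\big)$; if $b$ is even I let $B$ be the priority-$b$ vertices, $T$ Dis's attractor to $B$, and $R_0=W\setminus(B\cup T)$, then call the lifting algorithm on the subgame $R_0$ to split it, take $R$ to be Con's part of $R_0$, let $T'$ be Con's attractor to $R$ with strategy $\tau'$, $U=W\setminus(R\cup T')$, and return $\big((U,\mathrm{DecomposeCon}(U)),(T',\tau'),(R,\mathrm{DecomposeCon}(R))\big)$ (base cases $R=\emptyset$ or $U=\emptyset$ being immediate). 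That each returned tuple satisfies the definition of a Con-decomposition---in particular that in the even case $R$ is a trap for Dis and has top priority below~$b$, while $U$ has top priority at most~$b$---follows from exactly the case analysis already carried out in the proof of Lemma~\ref{lemma:existence-strategy-decomposition}.

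The complexity, and the correctness of the delegation, both hinge on short trap arguments showing that the genuine recursive calls land on proper disjoint subgames that again satisfy the invariant, so that no lifting outcome is ever recomputed. In the odd case $R$ is a trap for Con, hence any vertex from which Dis wins the subgame $R$ would also be Dis-winning in $W$; since $W$ is entirely Con-winning, $R$ is too, and needs no lifting call. In the even case the same trap reasoning applied to $R_0$ forces $R\neq\emptyset$ (otherwise all of $W$ would be Dis-winning) and, since $U$ is a trap for Con, forces $U$ to be entirely Con-winning; moreover $R$ and $U$ are disjoint, with $R$ of strictly smaller top priority and $U$ strictly smaller than $W$. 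Hence each invocation does at most one lifting call plus polynomial attractor and mean-payoff computation and recurses on at most two disjoint proper subgames, so that $f(k)\le 1+f(k_1)+f(k_2)$ with $k_1+k_2\le k$ and $k_i<k$, giving at most $2n-1$ invocations overall. Each of these $O(n)$ lifting calls runs in pseudo-quasi-polynomial time and each mean-payoff solve in pseudo-polynomial time, so computing Con's decomposition---and therefore both decompositions---is pseudo-quasi-polynomial. I expect the main obstacle to be precisely this bookkeeping: verifying that lifting on the small subgame $R_0$ genuinely determines the promoted set $R$, and that the promoted sets $R$ and $U$ inherit the ``Con wins everywhere'' invariant (and, where the definition demands it, the ``trap for Dis'' property) needed to recurse soundly.
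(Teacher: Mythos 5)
Your proposal is correct and takes essentially the same route as the paper: Dis's decomposition is read off the least progress measure via Lemma~\ref{lemma:measure-to-decomposition}, and Con's is built by the same McNaughton--Zielonka-style recursion as in Lemma~\ref{lemma:computing-strategy-decompositions-for-Con}, where the winning-set oracle is answered by the lifting algorithm (and the mean-payoff oracle by a pseudo-polynomial solver) and the genuine recursive calls land on disjoint subgames, yielding the same recurrence $T(n) \leq T(n') + T(n'') + O(m)$ with $n' + n'' < n$. The trap-based bookkeeping you flag as the main obstacle (non-emptiness of $R$, the ``Con wins everywhere'' invariant on $R$ and $U$, and $R$ being a trap for Dis) is exactly the reasoning the paper uses, largely implicitly, in its own proof.
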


In order to establish that strategy decompositions for Con can be 
computed in pseudo-quasi-polynomial time, it suffices to prove the 
following lemma, because the polynomial-time oracle algorithm becomes
a pseudo-quasi-polynomial algorithm, once the oracle for computing
winning strategies in mean-payoff games is replaced by a 
pseudo-polynomial algorithm~\cite{ZP96,BCDGR11,CR17}, and the oracle
for computing the winning sets in mean-payoff parity games is 
replaced by the pseudo-quasi-polynomial procedure from 
Section~\ref{section:lifting}. 

\begin{lemma}
  \label{lemma:computing-strategy-decompositions-for-Con}
  There is a polynomial-time algorithm, with oracles for computing 
  winning strategies in mean-payoff games and for computing 
  winning sets in mean-payoff parity games, that computes a strategy
  decomposition for Con of his winning set. 
\end{lemma}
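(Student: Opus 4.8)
The plan is to turn the McNaughton--Zielonka recursion underlying the proof of Lemma~\ref{lemma:existence-strategy-decomposition} into an actual algorithm, but to replace the recursive call that does not decrease the top priority by a single call to the winning-set oracle, so that the recursion only ever branches into \emph{disjoint} subgames. Concretely, I would design a recursive procedure $\textsc{ConDec}(W)$ which, on any subgame $W$ on which Con wins from every vertex, returns a strategy decomposition of $W$ for Con. At the top level I call the winning-set oracle on $V$ to obtain Con's winning set $W_{\mathrm{Con}}$ (a trap for Dis), and then run $\textsc{ConDec}(W_{\mathrm{Con}})$; by Lemma~\ref{lemma:strategy-for-Con} the resulting decomposition witnesses a winning strategy for Con on $W_{\mathrm{Con}}$. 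Apart from routine base cases, the procedure branches on the parity of the top priority $b$ of $W$.

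If $b$ is odd, then Con wins the mean-payoff game everywhere on $W$ as well (a mean-payoff parity win is in particular a mean-payoff win), so one call to the mean-payoff oracle yields the positional $\lambda$ required by case~1 of the definition; setting $B$ to the priority-$b$ vertices, $T = \mathrm{Attr}^W_{\mathrm{Con}}(B) \setminus B$ and $R = W \setminus (B \cup T)$, I make a single recursive call $\textsc{ConDec}(R)$ and return $\big((R, \omega'), (T, \tau), B, \lambda\big)$. Here $R$ is a trap for Con, its top priority is $< b$, and --- this is the crucial structural point --- Con still wins from every vertex of $R$: if Dis won the subgame $R$ from some vertex, her positional winning strategy there would confine the play to $R$ (as $R$ is a trap for Con) and hence win from that vertex already in $W$, contradicting the invariant. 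If $b$ is even, I first compute $R_0 = W \setminus \mathrm{Attr}^W_{\mathrm{Dis}}(B)$, a trap for Dis of top priority $< b$, and then \emph{call the winning-set oracle} on $R_0$ to obtain Con's winning set $R$ within it; as in the proof of Lemma~\ref{lemma:existence-strategy-decomposition} this $R$ is nonempty (every winning play eventually avoids $B$ and stays in $R_0$), is a trap for Dis, and has top priority $< b$. With $T = \mathrm{Attr}^W_{\mathrm{Con}}(R) \setminus R$ and $U = W \setminus (R \cup T)$, I make two recursive calls $\textsc{ConDec}(R)$ and $\textsc{ConDec}(U)$ and return $\big((U, \omega''), (T, \tau), (R, \omega')\big)$.

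The reason for replacing, in the even case, a recursive call on $R_0$ by the oracle call is precisely that it prevents the Dis-winning part of $R_0$ from ever being re-examined: the two subgames on which we recurse, $R$ and $U = W \setminus \mathrm{Attr}^W_{\mathrm{Con}}(R)$, are disjoint and each strictly smaller than $W$, and the same structural argument as above --- now applied to the trap-for-Con set $U$ --- shows Con wins everywhere on $U$, so the invariant is maintained on both calls. Writing $g(W)$ for the number of recursion nodes, the odd case gives $g(W) \le 1 + g(R)$ and the even case $g(W) \le 1 + g(R) + g(U)$ with $R, U$ disjoint and each strictly smaller than $W$; an easy induction then yields $g(W) \le 2|W| - 1$. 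Hence the recursion unfolds into only $O(n)$ nodes, each performing polynomially many steps together with a single oracle call, which gives the claimed polynomial-time oracle algorithm. Correctness is routine: at every node the returned tuple satisfies all the defining conditions of a $b$-decomposition for Con (Section~\ref{subsection:strategy-decomposition-for-square}), the sub-decompositions being valid by the maintained invariant and the recursion, so the overall object is a genuine strategy decomposition of $W_{\mathrm{Con}}$ to which Lemma~\ref{lemma:strategy-for-Con} applies.

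I expect the main obstacle to be exactly the interplay that makes the recursion cheap: proving the structural monotonicity fact that restricting to a trap for Con preserves ``Con wins everywhere'', and combining it with the oracle to guarantee both that $R$ is the correct nonempty Con-winning recurrent region and that $R$ and $U$ (together with the attractor $T$) partition $W$ into entirely Con-winning pieces. Once this is in place, the disjointness of the recursive calls --- and therefore the linear bound on the number of nodes --- follows, and the remaining verification that the emitted tuples meet the definitional conditions is a direct, if slightly tedious, traversal of the cases in the definition and in the existence proof.
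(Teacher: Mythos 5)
Your proposal follows essentially the same route as the paper's own proof: one initial winning-set oracle call to restrict to $W_{\mathrm{Con}}$; in the odd case a mean-payoff oracle call for $\lambda$ and a single recursive call on $R$; in the even case a winning-set oracle call on the Dis-attractor complement $R_0$ followed by two recursive calls on the disjoint subgames $R$ and $U$; and a linear bound on the number of recursion nodes coming from that disjointness. Your trap-based argument that Con's ``wins everywhere'' invariant is preserved on $R$ and $U$ is exactly the invariant the paper maintains, and your use of the Con attractor in the odd case is indeed what the definition of a $b$-decomposition for Con requires.

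The only flaw is the parenthetical justification that $R \neq \emptyset$ in the even case: it is not true that every play consistent with a winning strategy for Con eventually stays in $R_0$. Dis is under no obligation to play her attractor strategy, so a Con-winning play may pass through $\mathrm{Attr}^W_{\mathrm{Dis}}(B) \setminus B$ infinitely often without ever reaching $B$ or settling in $R_0$. The claim itself is correct, but it has to be proved by contradiction, as the paper does: if Con's winning set in $R_0$ were empty, then by Lemma~\ref{lemma:existence-strategy-decomposition} there would be a strategy decomposition $\omega$ of $R_0$ for Dis, so $\big((R_0, \omega), (T, \tau), B\big)$ (with $\tau$ the Dis attractor strategy) would be a $b$-decomposition of $W$ for Dis, and Lemma~\ref{lemma:strategy-for-Dis} would then give Dis a winning strategy somewhere in $W$, contradicting the invariant. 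With that one repair your argument is sound and matches the paper's.
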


\begin{proof}
  Without loss of generality, we may assume that Con has a
  winning strategy from every vertex in~$V$, since a single call to
  the oracle allows us to reduce~$V$ to the subgame corresponding to
  the winning set for Con.

  Below, we describe a recursive procedure for computing a strategy
  decomposition for Con of the set of all vertices, that has a similar 
  structure to the inductive proof of
  Lemma~\ref{lemma:existence-strategy-decomposition}.
  In parallel with the description of the recursive procedure, we
  elaborate an inductive proof that it does indeed compute a strategy
  decomposition for Con on~$V$.
  
  Note that our procedure avoids incurring the penalty of adding to
  its running time a factor that is exponential in the number of
  distinct vertex priorities, by repeatedly using the oracle for 
  computing the winning sets in appropriately chosen subgames.
  We give a detailed analysis of the worst-case running time at the
  end of this proof. 

  Let $B$ be the set of vertices of the highest priority~$b$;
  let $T$ be the set of vertices (not including vertices in~$B$) from 
  which Dis has a strategy to reach a vertex in~$B$;
  let $\tau$ be a corresponding positional reachability strategy;
  and let $R = V \setminus (B \cup T)$.
  We consider two cases, depending on the parity of~$b$.

  \paragraph{Even $b$.} We can assume that $R\neq \emptyset$, otherwise Dis would win the game, which contradicts the assumption that
  Con has a winning strategy from every vertex.
  Call the oracle to obtain the partition~$R_{\mathrm{Con}}$
  and~$R_{\mathrm{Dis}}$ of~$R$, the winning sets for Con and for Dis,
  respectively, in the subgame~$R$.
  We argue that $R_{\mathrm{Con}} \not= \emptyset$.
  Otherwise, by Lemma~\ref{lemma:existence-strategy-decomposition},
  there is a strategy decomposition~$\omega$ of~$R$ for Dis, and 
  hence $\big((R, \omega), (T, \tau), B\big)$ is a strategy
  decomposition of~$V$ for Dis, which, by
  Lemma~\ref{lemma:strategy-for-Dis}, contradicts the assumption that
  Con has a winning strategy from every vertex. 

  Let $T'$ be the set of vertices
  (not including vertices in~$R_{\mathrm{Con}}$)
  from which Con has a strategy to reach a vertex
  in~$R_{\mathrm{Con}}$, and let $\tau'$ be a corresponding positional
  reachability strategy, and let
  $U = V \setminus (R_{\mathrm{Con}} \cup T')$.
  By the inductive hypothesis, a recursive call of our procedure
  on $R_{\mathrm{Con}}$ will produce a strategy
  decomposition~$\omega'$ of~$R_{\mathrm{Con}}$ for Con, and another
  recursive call of the procedure on~$U$ will produce a strategy
  decomposition~$\omega''$ of~$U$ for Con.
  We claim that
  $\big((U, \omega''), (T', \tau'), (R_{\mathrm{Con}}, \omega')\big)$
  is a strategy decomposition of~$V$ for Con.

  \paragraph{Odd $b$.}
  Call the oracle for computing positional winning strategies in 
  mean-payoff games to obtain a positional strategy~$\lambda$ for Con
  that is mean-payoff winning for him on~$V$;
  such a strategy exists because Con has a mean-payoff parity winning 
  strategy from every vertex.
  Since $R$ is a trap for Con, it must be the case that Con has a
  winning strategy from every vertex in the subgame~$R$. 
  By the inductive hypothesis, a recursive call of our procedure
  on~$R$ will produce a strategy decomposition~$\omega'$ of~$R$ for
  Con.
  We claim that $\big((R, \omega'), (T, \tau), B, \lambda\big)$ is a
  strategy decomposition of~$V$ for Con.
  
  \paragraph{} 
  It remains to argue that the recursive procedure described above
  works in polynomial time in the worst case.
  Observe that in both cases considered above, a call of the procedure
  on a game results in two or one recursive calls, respectively. 
  In both cases, the recursive calls are applied to subgames with
  strictly fewer vertices, and---crucially for the complexity 
  analysis---in the former case, the two recursive calls are applied 
  to subgames on disjoint sets of vertices.
  Additional work (other than recursive calls and oracle calls) in both 
  cases can be bounded by $O(m)$, since the time needed is dominated by 
  the worst case bound on the computation of reachability strategies. 
  Overall, the running time function $T(n)$ of the recursive
  procedure, where $n$ is the number of vertices in the input game
  graph, satisfies the following recurrence:
  \[
  T(n) \leq T(n') + T(n'') + O(m), \qquad \text{where $n' + n'' < n$},
  \]
  and hence $T(n) = O(nm)$. 
\end{proof}

\section{Conclusion}
\label{section:conclusion}
Our main result is the first pseudo-quasi-polynomial algorithm 
for computing the values of mean-payoff parity games, and hence also
for deciding the winner in energy parity games and in parity games
with weights. 
The main technical tools that we introduce to achieve the main
result are strategy decompositions and progress measures for the 
threshold version of mean-payoff games. 
We believe that our techniques can be adapted to also produce 
optimal strategies for both players 
(i.e., the strategies that secure the value that we show how 
to compute).
Another direction for future work is improving the complexity
of solving stochastic mean-payoff parity games~\cite{CDGO14}. 

\section*{Acknowledgements}

This research has been supported by the EPSRC grant 
EP/P020992/1 (Solving Parity Games in Theory and Practice).

\bibliographystyle{plain}
\bibliography{main}

\end{document}